\newtheorem{theorem}{Theorem}
\newtheorem{lemma}{Lemma}
\newtheorem{definition}{Definition}
\newcommand{\BAStar}{BA{\fontfamily{lmr}\selectfont$\star$}}
\def\BibTeX{{\rm B\kern-.05em{\sc i\kern-.025em b}\kern-.08em
    T\kern-.1667em\lower.7ex\hbox{E}\kern-.125emX}}
\definecolor{mycolor}{rgb}{0.5,0,1}
\begin{document}
\title{On Incentive Compatible Role-based \\Reward Distribution in Algorand \vspace{-0.25cm}
}

\author{Mehdi Fooladgar$^{\dag \star}$, Mohammad Hossein Manshaei$^{\dag \ddag \star}$, Murtuza Jadliwala$^\diamond$, and Mohammad Ashiqur Rahman$^\ddag$\\
$^\dag$Department of Electrical and Computer Engineering, Isfahan University of Technology, Iran\\
$^\diamond$Department of Computer Science, University of Texas at San Antonio, USA\\
$^\ddag$Department of Electrical and Computer Engineering, Florida International University, USA\\
Emails:  m.fooladgar@ec.iut.ac.ir, manshaei@iut.ac.ir, murtuza.jadliwala@utsa.edu, \{marahman,mmanshae\}@fiu.edu.\\
\thanks{$^\star$M. Fooladgar and M. H. Manshaei are equally contributing authors.}}

\maketitle
\begin{abstract}
Algorand is a recent, open-source public or permissionless blockchain system that employs a novel proof-of-stake byzantine consensus protocol to efficiently scale the distributed transaction agreement problem to billions of users. In addition to being more democratic and energy-efficient, compared to popular protocols such as Bitcoin, Algorand also touts a much high transaction throughput. Despite its promise, one relatively under-studied aspect of this protocol has been the incentive compatibility of its reward sharing approach, without which cooperation cannot be guaranteed and the protocol will fail in a practical environment comprising of rational users. This paper is the first attempt in the literature to study and address this problem. By carefully modeling the participation costs and rewards received within a strategic interaction (or game) scenario, we first empirically show (by means of simulations) that even a small number of nodes defecting to participate in the protocol tasks due to insufficiency of the available incentives can result in the Algorand network failing to compute and add new blocks of transactions. We further show that this effect, which was observed in simulation experiments, can be formalized by means of a mathematical (game-theoretic) model of interaction in Algorand given its participation costs and the current (or planned) reward distribution/sharing approach envisioned by the Algorand Foundation. Specifically, on analyzing this game model we observed that mutual cooperation under the currently proposed reward sharing approach is not a Nash equilibrium. 
This is a significant result which could threaten the success of an otherwise robust distributed consensus mechanism. To remedy this problem, we propose a novel reward sharing approach for Algorand and formally show that it is incentive-compatible, i.e., it can guarantee cooperation within a group of selfish Algorand users. Extensive numerical and Algorand simulation results further confirm our analytical findings. Moreover, these results show that for a given distribution of stakes in the network, our reward sharing approach can guarantee cooperation with a significantly smaller reward per round. Our protocol also helps designers to better react to different conditions in the network, where the distribution of stakes can potentially converge to a high population of nodes with small stakes. 
\end{abstract}

\begin{IEEEkeywords}
Blockchain, Algorand, Incentive Compatibility, Game Theory, Reward Sharing.
\end{IEEEkeywords}

\section{Introduction}
\label{sec:Intro}

%

A \emph{blockchain} is an immutable distributed database that records a time-sequenced history of facts called transactions. This record is maintained by constructing and maintaining consistent copies of the cryptographic hash-chain of transaction blocks (or sets) in a distributed fashion. One key aspect of any blockchain protocol is the \emph{consensus} algorithm that enables agreement among a distributed network of autonomous \emph{nodes} (a.k.a. \emph{miners} in certain protocols) on the state of the blockchain, under the assumption that a fraction of them could be malicious or faulty. Blockchains could be further categorized as \emph{permissioned} or \emph{permissionless} depending on whether a trusted infrastructure exists or not to establish verifiable identities for network nodes. 

In \emph{Bitcoin} \cite{nakamoto2008bitcoin}, a popular permissionless blockchain protocol, consensus is achieved by the network selecting a \emph{leader node} or \emph{block proposer} in an unbiased fashion once every 10 minutes on an average (named as a \emph{round}). The selected leader node gets the right to commit or append a new block onto the blockchain. The network then implicitly accepts this block by adding the next block on top of it or reject it by choosing some other block. Consensus in Bitcoin is thus long-term, i.e., a block is said to be ``included" in the blockchain if it has received a significant number of confirmations\footnote{Number of blocks added on top of the block in question in the longest valid blockchain.}. The Bitcoin protocol uses a \emph{proof-of-work} (PoW) mechanism to select the leader in each round, in which each node or miner attempts to solve a hash puzzle and who solves it first is selected and gets the right to propose the next block.
As PoW involves significant computation, the Bitcoin protocol includes a reward mechanism to incentivize miners to compete in a fair manner and to behave honestly. Besides Bitcoin, several other distributed systems (e.g., \emph{Ethereum} \cite{ethereum} and other alt-coins \cite{allcrypto}) also employ a Bitcoin-like PoW-based consensus algorithm and a reward model to ensure honest participation.


The wide-scale growth and adoption of Bitcoin, both in terms of users and miners, exposed several significant and inter-related issues with its PoW-based consensus mechanism. In particular, the hash puzzle-based PoW approach is wasteful in term of energy perspective \cite{DwyerM:2014}, it does not prevent forking in the blockchain, results in mining (and hash power) centralization \cite{Rosenfeld:2011}. More importantly, it does not scale well with the number of transactions and network users \cite{Croman:2016}. For instance, the transaction rate of Bitcoin is currently only 7 transactions per second, which is significantly lower than the transaction rates afforded by centralized transaction processing systems such as PayPal (450 transactions per second) and VisaNet (between 1667 and 56,000 transactions per second) \cite{bitcoinscalability}. It is clear that current Bitcoin transaction throughput is not sufficient for many practical applications. Several platform specific efforts, such as BIP 102~\cite{bip102} and Bitcoin-NG~\cite{eyal2016bitcoin}, have been proposed to improve Bitcoin’s transaction throughput. 
Alternatively, other platform-agnostic solutions aimed to improve the scalability-related shortcomings of PoW-based consensus by employing a committee or sharding approach \cite{luu2016secure}, payment networks \cite{lightingnetwork,raidennetwork}, and side-chains \cite{back2014enabling}. Other approaches have tried to either improve the existing version of PoW itself \cite{primecoin:2013} or have proposed alternatives such as Proof-of-State (PoS) \cite{ppcoin:2012,blackcoin:2014,reddcoin:2014,post,chen2016algorand}, Proof-of-Burn (PoB) \cite{slimcoin:2014}, Proof-of-Elapsed Time (PoET) \cite{hyperledger} and Proof-of-Personhood (PoP) \cite{BorgeKJGGF:2017}. Some other efforts \cite{SompolinskyLZ:2016,tangle:2018} have attempted to improve scalability and throughput by implementing the distributed block ledger as a Directed Acyclic Graph (DAG), rather than a linear hash chain as done by popular systems such as Bitcoin.

%
Of all the above efforts, the Algorand~\cite{chen2016algorand,gilad2017algorand}  has garnered the most attention within the permissionless blockchain and cryptocurrency community, primarily because of its innovative PoS-based consensus or byzantine agreement protocol that is not only computationally (and energy) efficient, but also provides strong security guarantees against forking in a network comprising of faulty and malicious users or nodes\footnote{The term users and nodes are used interchangeably. Typically, \emph{users} control \emph{nodes} which are computational systems that are part of the Algorand peer-to-peer network and execute the reference software.}. Algorand eliminates the possibility of hash power centralization by removing the difference between normal network users and miners and scales pretty well. In fact, Algorand can commit about 750 MBytes of transactions per hour, which is 125 times of Bitcoin's throughput~\cite{gilad2017algorand}.
These security and performance guarantees of the Algorand consensus design have resulted in a lot of optimism within the blockchain community. However, one critical issue has not received much, if any, attention: \emph{does the currently proposed Algorand reward distribution approach promote participation or cooperation, and not defection, among rational users to complete all the required protocol tasks? In other words, is the current Algorand reward distribution approach incentive-compatible?}


Since the inception of Bitcoin, a significant effort has been spent by the research community towards understanding the incentive-compatibility of the Bitcoin's reward distribution approach \cite{biais2019blockchain, kiayias2016blockchain,kwon2017selfish}, towards characterizing the strategic behavior of rational miners in mining pools \cite{courtois2014subversive,kiayias2016blockchain,luu2015power,eyal2015miner,lewenberg2015bitcoin,schrijvers2016incentive}, and towards designing new incentive-compatible PoW-based cryptocurrencies \cite{DwyerM:2014} and scalability solutions \cite{manshaei:2018,kim2019two}. Such a thorough analysis under the additional assumption of rationality has helped to significantly mature the permissionless blockchain technology. However, no such analysis for the  Algorand exists yet, and this paper attempts to fill this research gap.

Here, we make the first attempt to formally analyze the Algorand's reward distribution strategy by employing well-established game-theoretic tools and techniques. More specifically, by modeling a single round of the Algorand's consensus or byzantine agreement protocol as a single stage non-cooperative multi-player game, we show that without an efficient reward sharing protocol, nodes are willing to deviate from cooperation and behave selfishly. Motivated by the need for solving this deviation problem, we propose a new reward distribution approach for Algorand, which is in addition to the stake possessed by the users, the approach considers their role in the byzantine agreement protocol in order to distribute the per-round rewards. We further show that our proposed role-based reward distribution approach is able to converge to a Nash equilibrium (NE) where a certain subset of nodes will cooperate. We  design a reward sharing mechanism based on our results and implement it in an Algorand simulator. We conduct extensive empirical evaluation of the proposed reward distribution approach using both a numerical and Algorand protocol simulations. Our empirical evaluation further confirms our analytical results by showing that we can distribute significantly smaller rewards among users and also enforce cooperation in Algorand. The Algorand Foundation can use our results to keep track of the network state and adapt reward accordingly. They can also employ our analytical tool to adapt the best reward distribution in the  future, when they want to distribute the transaction fees among Algorand users. To the best of our knowledge, this paper is the first to provide a systematic analysis of incentive design in Algorand. 


This paper is organized as follows. In Section~\ref{sec:SysModel}, we discuss the state of the art of Algorand. In Section~\ref{sec:ProblemFormulation}, we present the incentive design problem for Algorand. Section~\ref{sec:GameModel} presents the game model and its analysis. We also propose our novel reward sharing approach in this section, following by our evaluations in Section~\ref{sec:Simulation}. Finally, Section~\ref{sec:Conclusion} summarizes the paper.

\section{Algorand System Model}
\label{sec:SysModel}

\begin{figure*}[ht]
\centerline{\includegraphics[scale=0.7]{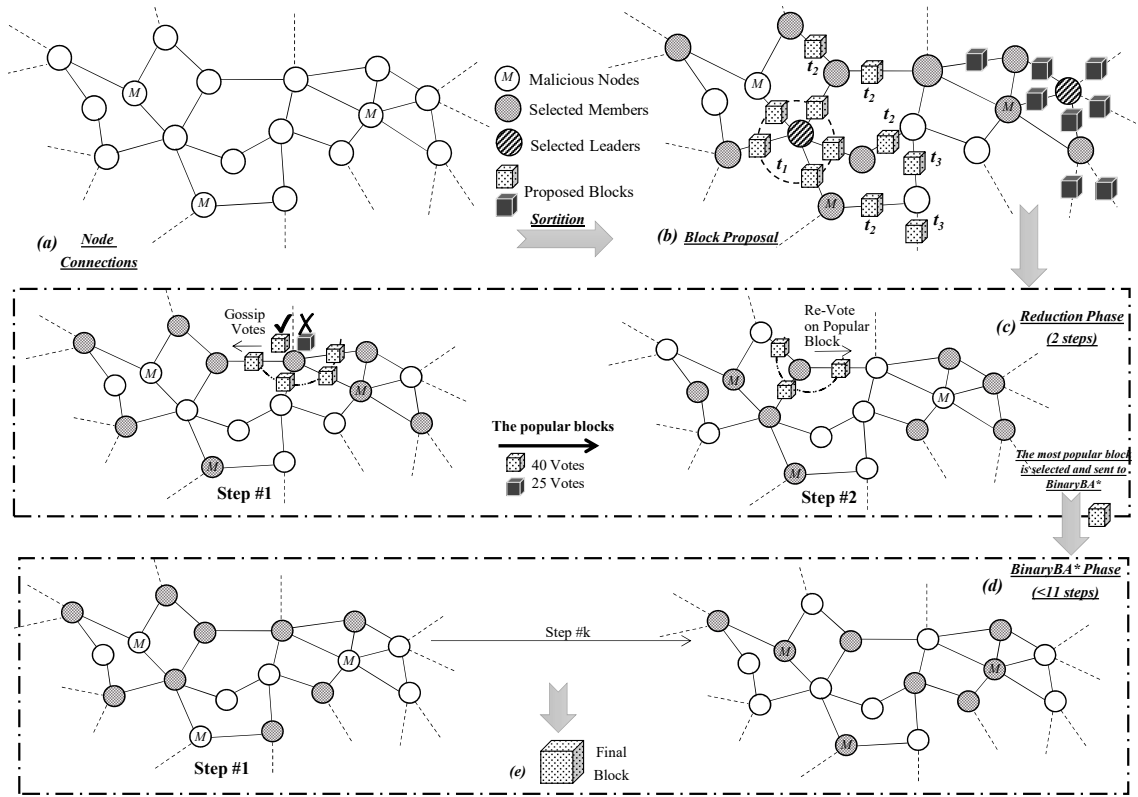}}
\caption{Algorand System Model. \emph {(a)} Algorand nodes build a peer-to-peer network which  includes both malicious and honest nodes. \emph{(b)} In any given time slot, each user executes cryptographic \emph{sortition algorithm} to determine his role in that time slot. At time $t_1$ each leader sends his proposed block to all first hop neighbors. Consequently, all nodes forward their received blocks to their neighbors in the following time slots. \emph{(c)} Reduction phase reduces consensus problem to agreement on one or two options. Committee members select block with the highest priority among their received blocks and gossip their votes for that. In $step \#1$, committee members vote for highest priority block they received. In $step\#2$, new committee members count last step votes and re-transmit popular blocks as their vote in the second step. Output of this phase can be an empty block in case of not-receiving minimum number of required  votes. \emph{(d)} Binary\BAStar~phase reach agreement on a proposed block from reduction phase or an empty block. Note that the figure represents the case where the network has strong synchrony and agrees on a final block. This is usually completed in the first step, i.e., $k=1$. But this phase can be followed up (in average for 11 steps, i.e., $k=11$) to ensure that each node agrees on a same consensus. In each step committee members votes for their observation of the reduction phase. \emph{(e)} the final block would be added to the chain.}
\label{fig:SysModel}
\end{figure*}


In this section, we first summarize the Algorand protocol. This description is intended to provide readers with the main concepts of Algorand as it relates to this paper. Interested readers are referred to \cite{chen2016algorand} for more technical details on Algorand and to \cite{nakamoto2008bitcoin} for Bitcoins and PoW-based blockchains. We begin by first contrasting the consensus approaches employed in PoW (e.g., Bitcoin) and PoS (e.g., Algorand) blockchains. Then, we outline the network (communication) protocol of Algorand followed by details of its consensus mechanism.

\subsection{Contrasting Consensus in PoW with PoS Blockchains}
Due to its open and distributed nature, consensus or agreement (on transaction blocks) in current public blockchain solutions such as Bitcoin have relied on a proof-of-work (PoW) approach, where users must repeatedly compute hashes to grow the blockchain, and the longest chain is considered authoritative. However, there are several significant shortcomings of such an approach:
\begin{itemize}
    \item PoW wastes significant amount of computation, and by relation, the electrical energy used to achieve it. PoW schemes also assume that a majority of the nodes contributing to the network's hash (or computational) power are honest, i.e., at least 51\% of the network's hash power comes from honest users/nodes.
    \item PoW-based consensus eventually leads to Concentration of Power, where entities in the network eventually monopolize computational power to control new block addition (e.g., Bitcoin Mining Pools) \cite{lewenberg2015bitcoin, eyal2015miner, luu2015power}.
    \item PoW allows the possibility of forking where two different hashchains could reach the same length and neither one supersedes the other \cite{kiayias2016blockchain,courtois2014subversive}. Efforts to mitigate the impact of forking in existing solutions has resulted in the block inter-arrival and transaction confirmation times to become impractically high (e.g., current Bitcoin block inter-arrival time is 10 minutes while transaction confirmation time is 1 hour). As a result, current PoW blockchain solutions do not scale well with the number of transactions and users.
\end{itemize}
To overcome these shortcomings, Algorand proposes a novel proof-of-stake (PoS) based consensus protocol.
Similar to Bitcoin, Algorand is fully decentralized and maintains a public, immutable ledger of transactions by reaching consensus on the order of transactions in the ledger. However in Algorand all users are ``equal'', i.e, there is no distinction between users (miners) who can add new blocks and those who just create and receive transactions. Thus, Algorand is more democratic! Moreover, as each user/node runs the same computationally efficient functions to achieve consensus (as opposed to PoW-based systems where users compete for the right to add the next block), Algorand does not waste computations, and thus electricity. Lastly, the design of Algorand's consensus protocol guarantees that there is no forking with an overwhelmingly high probability. A side-effect of this is that Algorand scales extremely well with the number of users/nodes and transactions, compared to classical PoW-based systems \cite{gilad2017algorand}.

\subsection{Summary of Algorand}
Next, we summarize the creation, distribution and agreement of transaction blocks in Algorand, as shown in Figure~\ref{fig:SysModel}.

\subsubsection{Assumed Adversary Model}
\label{subsub:Adversary}
In addition to standard cryptographic assumptions, Algorand assumes that \emph{honest users} always run bug-free reference software and consequently follow all defined steps by Algorand. As is standard in PoS systems, Algorand assumes that the fraction of money held by honest users is above some threshold $h$ (a constant greater than $\frac{2}{3}$). An adversary can participate in Algorand by creating multiple sybil nodes/users and owning some money or stake in the system. An adversary in Algorand can arbitrarily corrupt honest users, provided that the amount of money held by honest, non-compromised users remain above the threshold $h$. However an adversary cannot compromise the keys of honest non-compromised users. Algorand assumes that most honest users receive messages sent by most other honest users within a known time bound in order to continue to make progress on adding blocks to the blockchain (i.e., \emph{liveniness goal}). This is the \emph{strong synchrony} assumption. Algorand can achieve consensus or agreement on blocks (i.e., \emph{safety goal}) even if the network is \emph{asynchronous} (or controlled by the adversary) for a long but bounded period of time, provided it is strongly synchronous for a long period of time after that. 

\subsubsection{Network and Communication Protocol}
The Algorand network is a \emph{peer-to-peer} network of honest and faulty or malicious nodes, where each node is represented by a public/private key pair (see Figure~\ref{fig:SysModel}\emph{-(a)}). The number of malicious or faulty nodes is bounded by the honest stake ownership condition outlined earlier. Nodes in the network communicate in a peer-to-peer fashion using unique TCP connections. Communications happen by means of a standard \emph{gossip protocol} where each node broadcasts his message to all his peers, who in turn relay it to their neighbors. The Algorand communication protocol defines four types of messages: 
 \begin{itemize}
        \item \emph{Transaction}: this message transfers a certain amount of \emph{Algos} (currency unit in the Algorand system) from a sender to a receiver (identified with their public-keys) and signed by the sender (with its private key), which is referred to as a transaction. Multiple transactions are organized into a \emph{block}. An Algorand block is either a set of transactions or an empty block. In addition, each block contains a pre-determined random seed (described later) and the hash of the previous consensus or agreed block it is extending. 
        \item \emph{Voting}: this message contains a signed vote by the sender along with the \emph{sortition proof} (described below). Each sortition proof is associated with a priority value which is computed in a deterministic fashion.
        \item \emph{Block proposal}: this message contains a new Algorand block (to be added), along with the signed hash of the block and a sortition proof establishing the role of the sender as a block proposer or \emph{leader}.
        \item \emph{Credential}: this message contains the sortition proof of the block proposer or leader, which is generally broadcast at the beginning of each round by the leader using the gossip protocol. Peer nodes employ the priority values extracted from sortition proofs in the credential messages to avoid relaying block proposals with low priorities. This helps preventing congestion in the network due to a significantly large number of block proposals.
 \end{itemize}

\subsubsection{Consensus or Byzantine Agreement (\BAStar)}
Algorand's consensus or Byzantine Agreement (\BAStar) protocol operates in \emph{rounds}, where in each round all nodes attempt to reach agreement on a new block of transactions. At the beginning of a round, each node employs cryptographic sortition to privately determine if it is a block proposer or leader, i.e., has the right to propose a block for that round. 
To propose a block, each leader node assembles the pending and validated transactions inside a block proposal and gossips it together with its sortition proof of being elected a leader (see Figure~\ref{fig:SysModel}\emph{-(b)}). 
After block proposals are broadcast, each node collects incoming block proposals for a fixed duration, selecting and retaining the one valid block with the highest priority sortition proof. 

    
Each user then (asynchronously) initializes the \BAStar~protocol with the highest-priority block they have received. The \BAStar~protocol enables all nodes in the network to reach consensus on a single block. The \BAStar~protocol comprises of two sequential phases, the \emph{Reduction phase} (Figure~\ref{fig:SysModel}\emph{-(c)}) followed by the \emph{Binary\BAStar~phase} (Figure~\ref{fig:SysModel}\emph{-(d)}), with each phase consisting of a sequence of \emph{steps}. At a high level, in each step first a random or unpredictable group of nodes referred to as \emph{committee members} is elected. 
Then the elected committee members \emph{vote} on a specific block, based on votes received from the previous step, and broadcast their new votes in voting messages. Readers should recall that all voting messages also contain a sortition proof which proves the validity of the broadcaster as a committee member. 
\begin{itemize}
\item \emph{Reduction Phase (Figure~\ref{fig:SysModel}\emph{-(c)})}:  This first phase of the \BAStar~protocol comprises of \emph{exactly two} steps. In the first step, each committee member votes for the hash of the blocks proposed for consideration. In the second step, committee members vote for the block hash that received votes over a certain threshold. 
If no block hash receives enough votes, committee members vote for the hash of the default empty block. Reduction phase concludes with either at most one non-empty block hash (the one that received the maximum number of votes above the threshold) or hash of an empty block (if no block hashes received enough votes). This output of the reduction phase is passed as input to the Binary\BAStar~phase.
\item \emph{Binary\BAStar~Phase (Figure~\ref{fig:SysModel}\emph{-(d)})}: 
The goal of the Binary\BAStar~Phase is to reach agreement or consensus on the majority voted non-empty block (hash) from the reduction phase or, in case there is no consensus, on the empty (default) block. In the common case, when the network is strongly synchronous and the block proposer or leader was honest, Binary\BAStar~phase will start with the same block hash for most users, and will reach consensus in the first step, since most committee members
vote for the same block hash value. If the network is not strongly synchronous, Binary\BAStar~may return consensus on two different blocks (i.e., the block received from the reduction phase and the empty block).
\end{itemize}
The outcome of the Binary\BAStar~phase is used by the \BAStar~algorithm to arrive at either a \emph{final} or a \emph{tentative} consensus. 
Final consensus means that \BAStar~will not reach consensus on any other block for that round, while tentative consensus means that \BAStar~is unable to guarantee the safety goal in this round, either because of network asynchrony or due to a malicious block proposer or leader.

\subsubsection{Cryptographic Sortition}
Each node in the network employs a cryptographic sortition algorithm to determine if it is selected as a leader (or block proposer) at the beginning of each round, and later, if it is selected as a committee member at the beginning of each step (of both the Reduction and Binary\BAStar~ phases). The sortition algorithm is implemented using \emph{Verifiable Random Functions (VRF)} \cite{micali1999verifiable} which allow users to produce verifiable proofs using their private keys that can be publicly verified using the corresponding public key. Specifically, in order to generate a sortition proof for step $s$ in round $r$, a user $i$ computes $sig_i(r,s,Q^{r-1}),$ where $sig_i$ is a digital signature computed using the user $i$'s private key, and $Q^{r-1}$ is a random seed (predetermined at the end of the previous round, i.e., $r-1$). This sortition proof is included by the nodes in their block proposals and voting messages in order to prove their roles as leader and committee members, respectively. The recipients of these messages first validate the signature (using the public key) and then compute the hash of the sortition proof to verify a certain sortition condition that determines the validity of the claimed role. The possibility that the condition
is verified is directly proportional to the stake of the node (to which the proof belongs to) and depends on a constant role parameter fixed in the protocol. Due to space restrictions, we will not further elaborate on this verification condition and interested readers can find more technical details in~\cite{gilad2017algorand, micali1999verifiable,chen2016algorand}. In summary, cryptographic sortition is a simple, offline and lightweight process that involves computing a single hash and a digital signature in order to verify a node's role, and thus, its eligibility to propose a block and/or vote in the Algorand consensus (\BAStar) protocol.

\section{Incentive Design in Algorand: Problem Formulation and Motivations}
\label{sec:ProblemFormulation}

\begin{table}[t]
	\centering
	\rowcolors{2}{gray!25}{white}
	\caption{List of Symbols in Algorand Analysis}
	\begin{tabular}{c|l}
		\hline
		\textbf{Symbol} & \textbf{Definition} \\ \hline 
		$R_i$ & Foundation reward in round $t_i$\\
		$F_i$ & Summation of transaction fees in round $t_i$\\
		$P_i^{F}$ & Reward pool level in round $t_i$\\
		$B_i$ & The shared rewards in round $t_i$\\
		$\alpha$ & Fraction of rewards shared between leaders \\
		$\beta$ & Fraction of rewards shared between committee members\\
		$\gamma$ & Fraction of rewards shared between remaining online nodes\\ \hline
		$c^{fix}$ & Common costs of Algorand nodes \\
		$c^L$ & Costs for Algorand leaders\\
		$c^M$ & Costs for committee members \\
		$c^K$ & Costs for Algorand remaining online nodes\\ \hline
		$r_i^L$ & Rewards per each unit of stake for a leader \\
		$r_i^M$ & Rewards per each unit of stake for a committee member \\
		$r_i^K$ & Rewards per unit of stake for a remaining online node \\
		$s_{j}$ & Stake of node $j \in \{l_j,m_j,k_j\}$; $s_{l_j}$ is reward for leader $l_j$ \\ \hline
		$S_L$ &  Summation of all stakes for leaders; i.e. $S_L = \sum_{j \in L}{s_{l_j}}$\\
		$S_M$ &  Summation of all stakes for committee members\\
		$S_K$ &  Summation of all stakes for other nodes\\
		$S_N$ &  Summation of all stakes, i.e., $S_N = S_L + S_M + S_K$\\ \hline
		$u_i^{l_j}$ & Payoff for leader $l_j$ in round $t_i$\\
		$u_i^{m_j}$ & Payoff for committee member $m_j$ in round $t_i$\\
		$u_i^{k_j}$ & Payoff for remanding node $k_j$ in round $t_i$\\ \hline
	\end{tabular}
	\label{tab:symbol-table}
\end{table}

Similar to any permissionless blockchain-based cryptocurrency protocol, Algorand must also provide enough incentives to foster cooperation among its participants, whether they are leaders, committee members, or online nodes, in order to enable effective consensus (on the set of transactions). In this section, we first summarize all processing costs defined in the Algorand Byzantine consensus protocol, followed by a discussion of how rewards could be distributed among the various protocol participants. Finally, we empirically show that if rewards are not appropriately distributed in Algorand, rational participants have an incentive to not cooperate (in the consensus protocol tasks), resulting in no new blocks being added. Our goal here is to highlight the need for designing a \emph{incentive-compatible reward sharing mechanism} for achieving cooperation in Algorand. Note that Table~\ref{tab:symbol-table} presents the notations used throughout the paper.

\subsection{Algorand Costs}
\label{sub:Costs}

Based on the summary of the Algorand operation outlined in the previous section, it is clear that each participant or user, irrespective of their role, is expected to perform some processing and communication tasks during each phase of the protocol which incurs some measurable cost, say, in terms of consumed energy. These costs for each processing tasks can be further quantified using monetary values (e.g., \emph{Dollars} or \emph{Algos}) by using the current energy costs. Below we present a brief description of each of these tasks that incur some significant cost, which are also summarized in Table~\ref{tab:my_label}. We would like to stress that our goal here is not to precisely quantify the cost associated with each task - we simply argue that each of these tasks incurs a significant cost which can be easily quantified and is abstracted by us as the corresponding cost parameter.

{\bf Transaction Verification Cost} ($c^{ve}$): This cost is incurred by an Algorand node to check the validity of a transaction. For each transaction validity check, the node must verify the signature and also check whether the sending user has enough \emph{Algos} in its account for a successful transaction. A leader assembles a set of transactions into a block, and all Algorand nodes check the validity of transactions inside a block.

{\bf Seed Generation Cost} ($c^{se}$): Algorand requires a random and publicly known seed as an input to the sortition algorithm. Thus, a new seed is published in each round of Algorand. This seed is a random number generated by VRF \cite{micali1999verifiable} from the last seed value and the current round number. Also, for security concerns Algorand refreshes the seed every $R$ rounds \cite{gilad2017algorand}. We parameterize the cumulative cost of generating a new seed in each round as $c^{se}$.

{\bf Sortition Algorithm Cost} ($c^{so}$): As outlined in the previous section, the sortition algorithm employs a VRF function~\cite{micali1999verifiable} to generate a membership proof which is included by leaders and committee members in their messages to prove their role (leader or committee member). The cost of processing all the tasks of the sortition algorithm is parameterized as $c^{so}$.

\begin{table}[t]
	\centering
	\rowcolors{2}{gray!25}{white}
	\caption{Algorand tasks and costs given the role of nodes.}
	\footnotetext[1]{Every $R$ rounds}
	\begin{threeparttable}
		\begin{tabular}{l|c|c|c|c}
			\hline
			{\bf Task} & {\bf Symbol} & {\bf Leader} & {\bf Committee} & {\bf Others} \\ 
			 &  & {$l_j$} & {$m_j$} & { $k_j$} \\ \hline
			Transaction Verification & $c^{ve}$ & \checkmark & \checkmark & \checkmark \\
			Seed Generation & $c^{se}$ & \checkmark & \checkmark & \checkmark \\
			Sortition Algorithm & $c^{so}$  & \checkmark & \checkmark & \checkmark \\
			Verify Sortition Proof & $c^{vs}$ & \checkmark & \checkmark & \checkmark \\
			Block Proposition & $c^{bl}$  & \checkmark & & \\
			Gossiping & $c^{go}$ & \checkmark & \checkmark & \checkmark \\
			Block Selection & $c^{bs}$  & & \checkmark& \\
			Vote & $c^{vo}$  & & \checkmark &  \\
			Vote Counting & $c^{vc}$ & \checkmark & \checkmark & \checkmark \\
		\end{tabular}
		\begin{tablenotes}
		\end{tablenotes}
	\end{threeparttable}
	\label{tab:my_label}
\end{table}



{\bf Block proposition Cost} ($c^{bl}$):  The cost of assembling a set of outstanding, but valid, transactions (including the sortition proof) into a block and broadcasting it to the neighboring nodes in the network is borne by each (selected) leader node in each round. We parameterize this cost as $c^{bl}$.

   
{\bf Gossiping Cost} ($c^{go}$): During each round, each node in the Algorand network supports the network by forwarding (gossiping) network messages, including transactions, blocks and votes. A cumulative expected cost for each round for each node is parameterized by us as $c^{go}$.


 {\bf Block Selection Cost} ($c^{bs}$): In each round of the Algorand protocol, the sortition algorithm will select multiple (up to $70$) nodes as leaders, with each leader proposing its own block. Each committee member in each round, specifically, during the reduction phase of the Binary\BAStar~protocol, need to select (and vote) for the block with the highest priority. This block selection cost borne by a subset of committee members in each round, which includes the verification of sortition proofs, is parameterized as $c^{bs}$

%
    
 {\bf Vote Cost} ($c^{vo}$): Each selected committee member during each step of the Binary\BAStar~protocol should validate and check incoming messages (including, votes from previous steps) before submitting its own vote in that step. This cost, which also includes cost to append the sortition proof to the outgoing vote and broadcast to neighbors, is parameterized by us as $c^{vo}$.  The timeout to submit a vote is defined by Algorand and is equal to 20 seconds. It should be noted that $c^{vo}$ does not include the cost of vote counting and is outlined next.


    
{\bf Vote Counting Cost} ($c^{vc}$):  After all committee members have submitted their votes, each Algorand node should validate voting messages by checking their sortition proofs. $c^{vc}$ represents all associated costs of sortition proofs and signature verifications incurred when counting and tallying the votes inside each received vote message.



Given the above parameterization of some significant individual costs, let us outline the overall costs incurred by each individual Algorand node. Each node incurs a cumulation of two types of costs: \emph{(i)} a fixed cost, and \emph{(ii)} a role-based cost. The fixed cost ($c^{fix}$) represents the required costs borne by each node irrespective of its role and is given as: 
\begin{equation}
c^{fix}= c^{ve} + c^{se} + c^{so} + c^{go} + c^{vs} + c^{vc}.
\end{equation}

In addition to the fixed cost $c^{fix}$ in each round, each node incurs a cost based on its role(s) (i.e., Leader, Committee Member or None) in that round and is represented as follows:
\begin{equation}\label{eq_Costs}
c^j = \begin{cases}
c^{fix} + c^{bl} & j \in \mathbb{L} \\
c^{fix} + c^{bs} + c^{vo}& j \in \mathbb{M} \\
c^{fix} &  j \in \mathbb{K},  
\end{cases}
\end{equation}
%
%
where $\mathbb{L}$, $\mathbb{M}$, and $\mathbb{K}$ are the sets of leaders, committee members and all other users without particular rule, in round $i$.

\subsection{Reward Sharing in Algorand}
\label{sub:Reward}

Given the various costs, as outlined above, it is clear that rational users or participants (which we assume our users are) will fully participate in the distributed consensus protocol of Algorand if and only if they have enough incentive (or rewards) to do it. As Algorand is a cryptocurrency, the mechanism for providing incentives is straightforward - pay users in \emph{Algos} for their participation efforts and costs.  Bitcoin (and other cryptocurrencies) has an incentive model where the winner of the PoW puzzle receives incentives in the form of block rewards and transaction fees (paid out in Bitcoins) to be engaged in the PoW and block addition process. A similarly question arises in Algorand: \emph{which users should be paid, and how much, in order to enable their continued participation in the distributed consensus process?} 


Recently, the Algorand Foundation\footnote{The Algorand Foundation is dedicated to fulfilling the global promise of blockchain technology by leveraging the Algorand protocol and open source.} has suggested a tentative version of reward sharing in their protocol~\cite{algorand-token-dynamics},\cite{algorand-rewards-faq}, as shown in Figure~\ref{fig:RewardModel1}. The proposed reward sharing mechanism assumes creation and maintenance of two reward pools: \emph{(i) Foundation Reward Pool}, and \emph{(ii) Transaction Reward Pool}. These pools are nothing but public keys controlled by the Foundation. These public keys act as a central (foundation-controlled) storage where reward distribution related funds (\emph{Algos}) are deposited. All rewards for each round of the Algorand protocol are expected to be disbursed (or transferred) from this public key.
\begin{figure}[t]
\centerline{\includegraphics[scale=0.65]{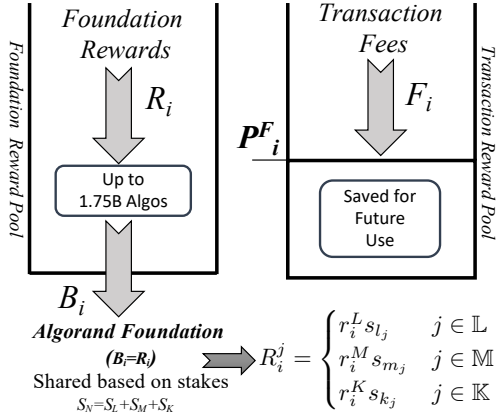}}
\caption{Algorand first shares the reward from 1.75B \emph{Algos}. Each time it shares $R_i$ among all users. We assume that the amount of reward taking out of the pool in round $i$ is equal to the input reward, i.e., $B_i=R_i$. Algorand saves all transaction fees in another pool to be used later.}
\label{fig:RewardModel1}
\end{figure}
To bootstrap the new cryptocurrency, the Algorand foundation implemented a ceiling of $1.75$ billion \emph{Algos} to be disbursed from the \emph{Foundation Reward Pool}. Per the foundation, in each round $R_i$ \emph{Algos} are added to the Foundation Reward Pool until the ceiling of $1.75$ billion is reached. According to Algorand Foundation, the projected rewards for the first $12$ reward periods follows the values presented in Table~\ref{tab:AlgorandReward}~\cite{algorand-token-dynamics}\cite{algorand-rewards-faq}. Each reward period spans $500$ thousands blocks. For example, in the first reward period, $10$ millions \emph{Algos} would be distributed, which is equal to approximately $20$ \emph{Algos} for each round, if in each round a block could be successfully added to the ledger.   

\begin{table*}[t]
	\centering
	\rowcolors{1}{gray!25}{white}
	\caption{Algorand Foundation suggested reward distribution in the first 12 reward period (equal to 6 millions blocks).}
		\begin{tabular}{l|c|c|c|c|c|c|c|c|c|c|c|c}
			\hline
			{\bf Reward Period} & 1&2&3&4&5&6&7&8&9&10&11&12 \\ \hline
			{\bf Projected Reward (Millions of \emph{Algos})} &10 &13&16&19&22& 25&28&31&34&36 &38&38\\ \hline
		\end{tabular}
	\label{tab:AlgorandReward}
\end{table*}

The reward sharing proposal suggests that in each round this reward $R_i$ is distributed among Algorand users in proportion to their current system stake, irrespective of their roles (i.e., leaders or committee member). In other words, users with higher stake receive a larger portion of the allocated foundation reward $R_i$ in each round. The transaction fees accumulated from the transactions in the added blocks during the bootstrap phase are saved or deposited into the \emph{Transaction Fee Pool}. This pool is not planned to be used for reward disbursement until the $1.75$ billion \emph{Algo} ceiling of the Foundation Reward pool is met. In summary, currently only the \emph{Foundation Reward Pool} is being used for the per-round reward (or incentive disbursement). Out of the $R_i$ \emph{Algos} disbursed to the \emph{Foundation Reward Pool} per round $i$, let us assume that $B_i$ (where, $B_i \leq R_i$) \emph{Algos} are actually disbursed among the participating or system users. Initially $B_i$ is expected to be equal to $R_i$.
Let us assume that the total value of stake in the system is $S_N$. Thus, $S_N= S_L + S_M + S_K.$ 
Here, $S_L$, $S_M$, and $S_K$ are the total stake values of leaders, committee members, and all other online nodes in round $i$, respectively. These values are changing in each round, but for the sake of presentation we write $S_N$ instead of $S_N(i)$. Then, the rewards assigned to a leader node $l_j$ in round $i$, $R_i^j$, would be $\frac{B_is_{l_j}}{S_N}$, where $s_{l_j}$ is the stake of leader $l_j$. In summary, we can define all reward distributions by:
\begin{equation}\label{eq_Rewards}
R_i^j = \begin{cases}
r_i^L s_{l_j} & j \in \mathbb{L} \\
r_i^M s_{m_j} & j \in \mathbb{M}\\
r_i^K s_{k_j} & j \in \mathbb{K}, 
\end{cases}
\end{equation}
where $ r_i^L = r_i^M = r_i^K = r_i = \frac{ B_i}{S_N} .$ Considering the proposed approach for reward sharing by the Algorand Foundation, we now analyze if the incentives provided by this mechanism is enough to guarantee cooperation by rational nodes.


\subsection{To be Cooperative or Not: Problem Motivation}
\label{sub:Motivation}

Let us assume that an Algorand node is \emph{cooperative} when it plays its role honestly by performing all of the assigned tasks, and consequently accepting all the associated costs. In contrast, a \emph{defective} node only remains online but does not perform any of its assigned tasks, except sortition computing to join the network (i.e., paying cost $c^{so}$). In this case, if appropriate countermeasures (e.g., punishment mechanisms) are not deployed, the defecting nodes may end up earning rewards by simply relying on other nodes to honestly perform their tasks and not contributing anything towards the block proposal, verification and consensus tasks. Considering this definition for cooperative and defective behavior, we can divide Algorand node behaviors into the following four categories: 

\begin{itemize}
\item {\bf Honest nodes:} These nodes always cooperate. They are also altruistic and cooperate even when the reward is not more than the cost of cooperation.   
\item {\bf Honest but Selfish nodes:} These nodes cooperate and defect depending on the amount of received incentives versus the cost for their actions. In other words, they are always selfish and will cooperate if and only if the reward is more than the cost of cooperation.
\item {\bf Malicious nodes:} They arbitrarily cooperate or defect. In addition to this, they may inject malicious transactions and blocks, or arbitrarily compromise other nodes.
\item {\bf Faulty nodes:} These nodes are offline due to system malfunction (and not by choice) and do not contribute anything to the network operation.
\end{itemize}

In this paper, we assume that all network nodes behave in an honest but selfish manner. Moreover, in this preliminary work, we assume that nodes do not arbitrarily behave maliciously or become faulty. In other words, nodes in our network make a strategic decision to cooperate (participate) or defect (not participate) solely by maximizing their own interests/incentives. They neither make any arbitrary protocol participation decision, nor maliciously modify the protocol to maximize their interests/incentives. To get an insight into the robustness of the proposed Algorand reward sharing approach against selfish (or rational) node/user behavior, we conduct some preliminary simulation experiments.



Our simulator, written in Python, is based on the Algorand discrete event simulator by Deka et al.~\cite{algosim} and implements all Algorand protocol modules, including, Sortition, Reduction and Binary\BAStar. We are also able to simulate network delays and various synchrony conditions, as well as, customize different network parameters such as total number of nodes and the distribution of network message delays in our simulator. Within this simulation framework, we also implemented the reward sharing protocol proposed by the Algorand Foundation (described earlier), which computes a per-round reward to be shared among the nodes. We simulate each round of the Algorand block proposal and consensus protocol, as outlined in Section~\ref{sec:SysModel} , and execute the reward sharing algorithm at the end of each round to compute and distribute rewards to all the network nodes based on the reward sharing protocol.


We simulate 100 runs of the protocol rounds, and average over all the possible block outcomes in 100 simulations. In each simulation instance, we randomly select defective nodes (i.e., honest and selfish nodes who chose to defect given their payoff) by means of a uniform distribution. We consider the total number of defective nodes in the network in steps of 5\%, 10\%,  15\%, 20\%, 25\%, and 30\% of all the nodes in Algorand network. Moreover, we distribute the stakes among all nodes with a uniform distribution between 1 to 50 \emph{Algos}. Note that we compute trimmed mean which ignores 20\% top and bottom data to compute the mean values of these 100 simulations. In our simulation each node sends the messages to 5 other nodes that are randomly selected from the network. We first analyze the impact of defective nodes on the block creation process.
 The corresponding number of nodes who extracted final, tentative, or no blocks from the network messages (i.e., votes) are plotted in Figure~\ref{fig:defection}. As shown in Figure~\ref{fig:defection}-(a), even with a low defection rate of 5\% the number of tentative blocks is increasing in the network. Moreover, about 7\% of nodes do not receive any block. When the number of defective nodes is increasing the Algorand network fails to inform most of nodes about the final blocks. 



\begin{figure*}[t]
	\centering
	\begin{subfigure}{.33\textwidth}
		\centering
		\includegraphics[width=\linewidth]{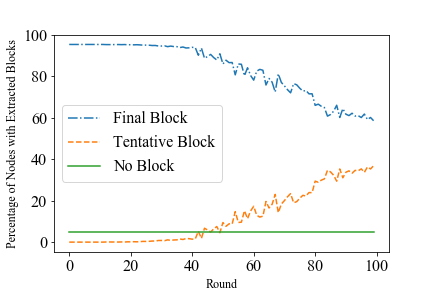}
		\caption{Defection Rate: $5\%$}
	\end{subfigure}
	\begin{subfigure}{.33\textwidth}
		\centering
		\includegraphics[width=\linewidth]{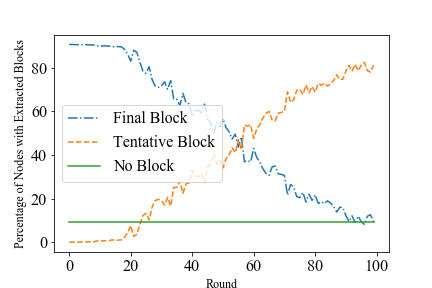}
		\caption{Defection Rate: $10\%$}
	\end{subfigure}%
	\begin{subfigure}{.33\textwidth}
		\centering
		\includegraphics[width=\linewidth]{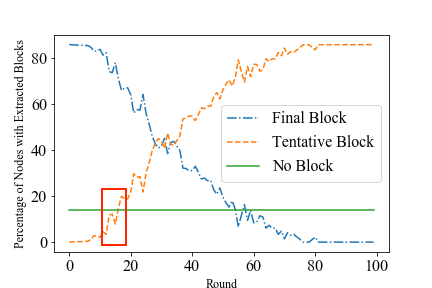}
		\caption{Defection Rate: $15\%$}
	\end{subfigure}\\
	\begin{subfigure}{.33\textwidth}
		\centering
		\includegraphics[width=\linewidth]{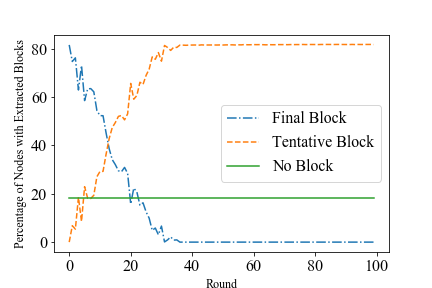}
		\caption{Defection Rate: $20\%$}
	\end{subfigure}
	\begin{subfigure}{.33\textwidth}
		\centering
		\includegraphics[width=\linewidth]{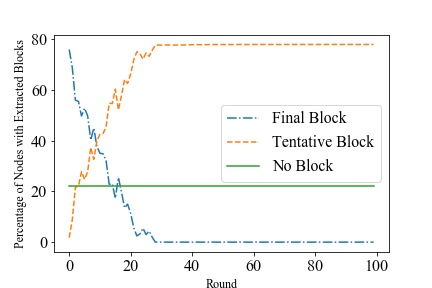}
		\caption{Defection Rate: $25\%$}
	\end{subfigure}%
	\begin{subfigure}{.33\textwidth}
		\centering
		\includegraphics[width=\linewidth]{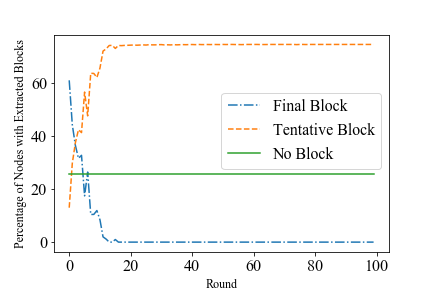}
		\caption{Defection Rate: $30\%$}
	\end{subfigure}%

	\caption{The percentage of nodes who extracted the tentative and final blocks with different rate of defection. In each scenario, we randomly choose nodes to behave as honest but selfish nodes. These nodes are called defective and will not cooperate if the benefit is not more than the cost of cooperation for them.}
	\label{fig:defection}
\end{figure*}


For example, as shown in Figure ~\ref{fig:defection}-(c), with $15\%$ defection rate, most of Algorand nodes don't reach any consensus on a final block after round $\#30$. 
In other words, with $15\%$ of defective nodes the network may go to weak synchrony state or even asynchrony in some rounds and it prevents some nodes to receive network messages (e.g., votes and block proposals). However, by reaching the strong synchronous network after a long period of asynchrony (i.e. weak synchrony assumption), nodes who have extracted tentative blocks can finalize their blocks. This effect has been highlighted in Figure~\ref{fig:defection}-(c) in the proximity of rounds 17 through 20. As shown in the figure, in round $\#17$ the asynchrony of network has caused an increase in the number of nodes that have extracted tentative blocks from the network. But in round $\#18$, network becomes synchronous again and consequently a majority of the Algorand nodes are able to extract the finalized blocks. We also need to clarify that these defective nodes may control more than threshold $h$ (i.e. Algorand honest assumption as defined in Section~\ref{subsub:Adversary}) of stakes in the network. This happens if there are more nodes with high values of stakes in the list of defective nodes. Defection of these nodes can amplify the network synchrony problem in the Algorand network and consequently the block creation process. Finally, the results show that with even 30\% of defective nodes the network fails even in the first few rounds.

%

In summary, the above simulation results show that without an incentive-compatible reward sharing approach that fosters cooperation, rational nodes will be inclined to defect from the block creation and consensus process resulting in asynchrony mode and fail to provide blocks. In the following section, we will propose a game-theoretical model to analyze the  effect of defection and propose a possible solution to avoid defective behavior in the Algorand network.

\section{Game Model and Incentive Analysis in Algorand} 
\label{sec:GameModel}

In order to obtain a good insight of the strategic behavior of the nodes in Algorand, we model the interaction of these nodes with a static non-cooperative game. We first focus on the interaction between the Algorand nodes who are supposed to interact and create blocks in each round. Let us assume that these nodes are the  players of a static game $\mathcal{G}^{Al}$ in round $i$ of the Algorand process. We assume that all strategies are hard-wired for each node. In other words, each node does not change his chosen strategy during round $i$ of the game. They also choose their strategies simultaneously. 

In our Algorand game $\mathcal{G}^{Al}$ users must decide whether to cooperate and contribute to make a new block or not. The game $\mathcal{G}^{Al}$ is defined as a triplet $(\mathbb{P},\mathbb{S},\mathbb{U})$, where $\mathbb{P}$ is the set of players, $\mathbb{S}$ is the set of strategies and $\mathbb{U}$ is the set of payoff values. The set of players $\mathbb{P}$ includes leaders $\mathbb{L}$, committee members $\mathbb{M}$, and all other users $\mathbb{K}$, i.e., $\mathbb{P} = \mathbb{L} \bigcup \mathbb{M} \bigcup \mathbb{K}$. An Algorand node can take an action ($s_i$) from the set $\mathbb{S} = \{C, D, O \}$, where $C$, $D$, and $O$ represent \emph{(i)} \emph{Cooperate}, \emph{(ii)} \emph{Defect}, and \emph{(iii)} \emph{Offline}, respectively. As we discussed in previous section, cooperative nodes follow all defined tasks, while defective nodes are only online but do not perform their assign tasks. Moreover, a node can play \emph{offline} in round $i$ (i.e., plays $O$), in which it runs sortition computing but it becomes offline and do not receive any reward. Given the above assumption, the following lemma shows that the $O$ strategy is always strictly dominated by $D$ strategy.

%

\begin{lemma}\label{lem:O}
In $\mathcal{G}^{Al}$, strategy $O$ is strictly dominated by playing defection ($D$).
\end{lemma}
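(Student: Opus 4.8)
The plan is to compare directly the round-$i$ payoffs of an arbitrary node under the two pure strategies $O$ and $D$, and to show that the difference is strictly positive against \emph{every} profile of the remaining players. First I would read off the two payoffs from the strategy descriptions. A node playing $O$ runs the sortition computation (incurring $c^{so}$) but then goes offline and, by definition, collects no reward, so its payoff is $u^O = -\,c^{so}$, a quantity that does not depend on what any other player does. A node playing $D$ likewise pays only $c^{so}$, since it performs none of the remaining tasks; but because it stays online it is still entitled, under the Foundation reward rule of Eq.~\eqref{eq_Rewards} with $r_i^L = r_i^M = r_i^K = r_i = B_i/S_N$, to the same stake-proportional share $r_i s_j$ that a cooperating node of equal stake would receive. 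Hence $u^D = r_i s_j - c^{so}$.

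The second step is to subtract, obtaining $u^D - u^O = r_i s_j$. Since $r_i = B_i/S_N > 0$ and every participating node holds a strictly positive stake $s_j > 0$, this difference is strictly positive, which is exactly the assertion that $D$ strictly dominates $O$. The economic content is that the two strategies carry the identical, role-independent cost $c^{so}$, so that cost cancels and the comparison collapses to the reward term alone: an online (defecting) node free-rides on the stake-proportional reward, whereas an offline node simply forfeits it.

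The step I expect to require the most care is justifying independence from the other players' strategies, which is what separates genuine strict domination from the weaker claim that $D$ is merely a best response. I would stress that $c^{so}$ is fixed and role-independent, so it cancels in every profile, and that under Eq.~\eqref{eq_Rewards} the reward accruing to an online node is pinned down by its own stake relative to the online stake, not by whether it executed any task. The one delicate profile is the degenerate case where the collective defection or offline behavior of everyone else prevents any block from being added, so that no reward is disbursed that round; there I would appeal to the modeling assumption embodied in Fig.~\ref{fig:RewardModel1}, namely that $B_i = R_i > 0$ Algos are shared among the online users in each round, which keeps $r_i s_j$ strictly positive. With that assumption the inequality $u^D - u^O = r_i s_j > 0$ holds uniformly over all opponent profiles, completing the proof.
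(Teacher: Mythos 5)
Your core computation is the same as the paper's own (two-sentence) proof: a node playing $O$ pays the sortition cost and forfeits everything, giving $u^O=-c^{so}$ independently of the opponents' profile, while a node playing $D$ pays the identical cost but remains online and eligible for the stake-proportional reward $r_i s_j$ of Eq.~\eqref{eq_Rewards}; the common cost $c^{so}$ cancels and the comparison collapses to the reward term. You were also right to single out the degenerate profile in which all other players defect or go offline as the step that separates genuine strict domination from a weaker best-response claim --- the paper's proof silently skips exactly this case.

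However, your patch for that degenerate profile conflicts with the paper's own model. You assume that $B_i=R_i>0$ is disbursed among online nodes in every round regardless of whether a block is added, but the paper assumes the opposite: in the proof of Theorem~\ref{thm:AllD}, under the all-defection profile ``there is no block added to the chain and they cannot earn any \emph{Algo},'' so a node playing $D$ against all-defecting opponents earns payoff exactly $-c^{so}$. Under that block-contingent payout semantics, $u^D-u^O=0$ at this one profile, the uniform strict inequality you claim fails, and $D$ only \emph{weakly} dominates $O$. To be fair, the paper's own proof is loose in precisely the same way (it asserts the node ``can obtain the reward by playing $D$'' without qualification over opponent profiles), so your attempt is no weaker than the original; but your explicit repair is inconsistent with how rewards are treated elsewhere in the paper, and the distinction is not cosmetic: weakly dominated strategies can survive in Nash equilibria (the all-$O$ profile is itself an equilibrium under block-contingent payouts, since a unilateral deviation to $D$ or $C$ still produces no block), whereas the paper uses this lemma to discard $O$ from all subsequent equilibrium analysis --- a use that only strict domination licenses.
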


\begin{proof}
A user always obtains greater payoffs by playing $D$ instead of $O$, for all possible strategy profiles of other users (i.e., opponents). In fact, a user can obtain the reward by playing $D$ in the current version of Algorand, but it's payoff would be $-c^{so}$, if it plays $O$.
\end{proof}

Given the result in Lemma~\ref{lem:O}, we are not going to consider strategy $O$ in our analysis as it cannot appear in any Nash equilibrium or will not be chosen by any rational player. In the following section, we present our results for the analysis of $\mathcal{G}^{Al}$. Recall that the definition of $\mathcal{G}^{Al}$ are based on the proposed reward sharing by Algorand Foundation~\cite{algorand-token-dynamics}.

\subsection{Analysis of $\mathcal{G}^{Al}$}

In $\mathcal{G}^{Al}$, we define strategies profiles $All-D$ and $All-C$, where all nodes choose to play $C$ and Play $D$, respectively. We apply the following game-theoretic concept to analyze $\mathcal{G}^{Al}$.  

\begin{definition} \label{def:NE}
The strategy profile $s^{*}$ constitutes a Nash equilibrium profile if none of the players
can unilaterally change his strategy to increase his utility. 
\end{definition} 

In other words, if all strategies are mutual best responses to each other, then no player has any motivation to deviate from the given strategy profile. The following theorem shows the existence of all defection strategy ($All-D$) NE for $\mathcal{G}^{Al}$.

\begin{theorem} \label{thm:AllD}
In each round $i$ of $\mathcal{G}^{Al}$ with $N$ players ($n_L$ leaders, $n_M$ committee members, and $n_K$ remaining nodes), all-defection strategy profile ($All-D$) is a Nash equilibrium.
\end{theorem}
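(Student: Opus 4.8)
The plan is to verify Definition~\ref{def:NE} directly: I must show that, when every player plays $D$, no single player can strictly increase his payoff by unilaterally switching his strategy. By Lemma~\ref{lem:O} the strategy $O$ is strictly dominated by $D$, so the only deviation worth checking is a switch from $D$ to $C$ by an arbitrary player $p \in \mathbb{P}$. The whole argument therefore reduces to a payoff comparison between the $All-D$ profile and the profile in which $p$ alone cooperates.

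First I would compute the payoff of every player under $All-D$. Since all nodes defect, the tasks required to assemble, broadcast and reach consensus on a block are left unperformed, so no block is added in round $i$; consequently the per-round reward is not disbursed, the reward term in \eqref{eq_Rewards} vanishes for all players. A defector pays only the sortition cost $c^{so}$ (it merely stays online), so every player's payoff under $All-D$ equals $-c^{so}$.

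Next I would evaluate the unilateral deviation in which player $p$ switches to $C$ while everyone else keeps defecting. The key step is the observation that a single cooperating node cannot, on its own, drive the \BAStar~protocol to a (tentative or final) block: block addition requires a leader proposal together with a threshold number of committee-member votes sustained across the Reduction and Binary\BAStar~steps, and no lone node can supply this collective quorum. Hence a block is still not produced, the reward remains zero, and the only effect of the deviation is that $p$ now incurs its full role-based cost $c^p$ from \eqref{eq_Costs}. Since $c^p \ge c^{fix} = c^{ve}+c^{se}+c^{so}+c^{go}+c^{vs}+c^{vc} > c^{so}$, the deviating payoff $-c^p$ is strictly smaller than $-c^{so}$. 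Thus the deviation is strictly unprofitable for every $p \in \mathbb{P}$, which establishes that $All-D$ is a Nash equilibrium.

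The main obstacle is the justification of that middle step, namely that one cooperating node is incapable of generating a block when all others defect. This cannot be read off from the reward formulas alone; it follows from the quorum/threshold structure of \BAStar~described in Section~\ref{sec:SysModel} (consensus needs a supermajority of committee votes in each step), and it is corroborated by the simulations of Section~\ref{sub:Motivation}, where even a $30\%$ defection rate already collapses block production---\emph{a fortiori} the near-total defection faced by a lone cooperator yields no block. Everything else is a direct cost comparison.
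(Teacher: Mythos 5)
Your proof is correct and follows essentially the same route as the paper's: both establish that every player earns $-c^{so}$ under $All$-$D$, and that a unilateral switch to $C$ cannot produce a block (because the \BAStar~quorum thresholds cannot be met by a lone cooperator), so the deviator merely pays its role-based cost $c^L$, $c^M$, or $c^K$, each strictly exceeding $c^{so}$. Your explicit dismissal of the $O$-deviation via Lemma~\ref{lem:O} and your uniform treatment of the three roles via $c^p \ge c^{fix} > c^{so}$ are cosmetic tightenings of, not departures from, the paper's case-by-case argument.
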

\begin{proof}
Let us consider a strategy profile where all Algorand nodes defect, where there is no incurred costs such as  $c^L$, $c^M$, or $c^K$ for a leader, committee member, or other online nodes. Hence, the payoff for each node would be $u_i = -c^{so}$ as there is no block added to the chain and they cannot earn any \emph{Algo}. In this case:
\begin{enumerate}
    \item None of the Algorand leaders $l_j$ can increase their payoff unilaterally by changing their strategies. Because, the cooperative leader can not gain any reward without contribution of at least $S_{STEP}$ committee members in each step of \BAStar~protocol and $S_{FINAL}$ members for the final committee, as discussed in Section \ref{sec:SysModel}. In other words, the payoff of a leader who deviates from $D$ to $C$ would be $u_i^{l_j} (C) = -c^L$, which is always smaller than his defective payoff (i.e., $u_i^{l_j} (D) =-c^{so}$).
    
    \item Similarly, a cooperative committee member $m_j$ cannot obtain any reward without the contribution of leaders and sufficient number of committee members. In this case, payoff of a committee member who has deviated is $u_i^{m_j}=-c^M$.
    
    \item With similar justification, we can prove that all other online nodes $k_j$  will not be able to increase their payoffs unilaterally by deviating from $D$ to $C$, as its payoff would be decreased to $u_i^{k_j}(C) = -c^K$. 
\end{enumerate}
Hence, $All-D$ strategy profile is a NE in $\mathcal{G}^{Al}$.
\end{proof}
In fact, in such distributed protocols one would like to enforce $All-C$ strategy profiles as a Nash equilibrium. But the following theorem shows that the current Algorand incentive mechanism can not enforce all nodes to cooperate.
\begin{theorem}\label{thm:AllC}
In each round $i$ of $\mathcal{G}^{Al}$ with $N$ players ($n_L > 1$ leaders, $n_M$ committee members, and $n_K$ remaining nodes), if rewards are shared solely based on the current values of the stakes as shown in Equation (\ref{eq_Rewards}), i.e., the proposed Algorand Foundation mechanism, we cannot establish all-cooperation strategy profile (All-C) as a Nash equilibrium strategy profile.
\end{theorem}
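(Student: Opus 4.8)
The plan is to refute the Nash equilibrium property of $All\text{-}C$ directly from Definition~\ref{def:NE} by exhibiting a single player who strictly benefits from a unilateral deviation from $C$ to $D$. The crux is the structure of Equation~(\ref{eq_Rewards}): under the Foundation's scheme the per-unit rate satisfies $r_i^L = r_i^M = r_i^K = r_i = B_i/S_N$, so a node's reward $r_i s_j$ depends only on its stake and is completely decoupled from whether it actually performs its assigned tasks. As already noted in the proof of Lemma~\ref{lem:O}, a node that plays $D$ still collects this stake-proportional reward in the current version of Algorand, provided a block is successfully added and $B_i$ is disbursed. This decoupling of reward from effort is the single fact that drives the whole argument.

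First I would fix the $All\text{-}C$ profile and single out one remaining online node $k_j$ as the deviator, since this is the tightest case. The key observation is that when every other node continues to cooperate, a single defection by $k_j$ cannot disrupt block creation: a remaining node performs no task that is critical to consensus, so a block is still proposed, voted on, and finalized, and the pool $B_i$ is still distributed. Consequently $k_j$ receives the same reward $r_i s_{k_j}$ whether it plays $C$ or $D$. Comparing payoffs via Equation~(\ref{eq_Costs}) gives $u_i^{k_j}(C) = r_i s_{k_j} - c^{fix}$ against $u_i^{k_j}(D) = r_i s_{k_j} - c^{so}$. Since $c^{fix} = c^{ve} + c^{se} + c^{so} + c^{go} + c^{vs} + c^{vc} > c^{so}$, with all summands strictly positive, we have $u_i^{k_j}(D) > u_i^{k_j}(C)$, the deviation is strictly profitable, and $All\text{-}C$ violates Definition~\ref{def:NE}.

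For completeness I would then remark that the same reasoning applies to a leader or a committee member, and this is exactly where the hypothesis $n_L > 1$ enters. If a single leader $l_j$ deviates, the assumption $n_L > 1$ guarantees that at least one other honest leader remains to propose a block, and analogously the committee thresholds $S_{STEP}$ and $S_{FINAL}$ are still met after one member drops out; so block creation (and hence the payout $r_i s_{l_j}$ or $r_i s_{m_j}$) survives the deviation while the deviator saves $c^{bl}$, respectively $c^{bs}+c^{vo}$. In each case the defecting payoff strictly exceeds the cooperating one, so no role can be pinned to $C$ in equilibrium.

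I expect the main obstacle to be arguing rigorously that block creation—and therefore the reward payment to the deviator—is unaffected by one unilateral defection. For a remaining node this is immediate, which is why I lead with that case; for the leader and committee-member cases it is precisely the multiplicity assumption $n_L > 1$ (together with the committee being large relative to $S_{STEP}$ and $S_{FINAL}$) that supplies the needed robustness. Once this robustness is granted, the payoff comparison is routine arithmetic on Equations~(\ref{eq_Costs}) and~(\ref{eq_Rewards}), and the essential conclusion stands: because rewards track stake rather than effort, every node strictly prefers to shirk while the block still forms, so mutual cooperation cannot be sustained as a Nash equilibrium.
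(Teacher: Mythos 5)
Your proof is correct and takes essentially the same approach as the paper's: fix the $All$-$C$ profile, observe that under Equation~(\ref{eq_Rewards}) the reward $r_i s_j$ is decoupled from effort and survives a single unilateral defection (the block still forms, with $n_L > 1$ covering the leader case), so switching from $C$ to $D$ strictly raises each role's payoff by reducing its cost to $c^{so}$, contradicting Definition~\ref{def:NE}. The only cosmetic difference is ordering --- you lead with the remaining-node case (which, as you note, alone suffices to refute the equilibrium), while the paper runs through leader, committee member, and remaining node in turn.
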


\begin{proof}
First, let us assume that all Algorand nodes have already cooperated and paid the costs $c^L$, $c^M$, or $c^K$ as leader, committee member, or other online nodes. Given the values of costs and rewards calculated in Equation (\ref{eq_Costs})  and (\ref{eq_Rewards}), one can compute the payoff for each node $j \in \{l_j, m_j, k_j \}$ in round $i$ as follows:
\begin{equation}\label{eq_Payoffs}
    u_i^j(C) = \begin{cases}
        r_i s_{l_j} - c^L & \text{Leader } l_j \\
        r_i s_{m_j} - c^M & \text{Committee member } m_j \\
        r_i s_{k_j} - c^K & \text{Online node } k_j
    \end{cases}
\end{equation}
Consequently, by comparing cooperative and defective payoffs for each node, and if we assume that they deviate unilaterally, we can conclude that
\begin{enumerate}
    \item A leader $l_j$ can increase his payoff unilaterally by ignoring his role and acting as an online node without any role. In other words, it plays $D$ and acts as an online node. Because, other leaders are still active in the Algorand network and the protocol can reach consensus with remaining nodes. So, $l_j$ will pay the cost $c^{so}$ which is always smaller than $c^L$.
    \item Similarly, a committee member $m_j$ can increase his payoff unilaterally by ignoring his role and play $D$.
    \item Finally, a remaining member $k_j$ can increase his payoff by avoiding to pay $c^K$ costs and just keep himself online without gossiping messages (i.e., play $D$). Recall that it must still pay $c_{so}$ for sortition computing. In this case, the utility of node $k_j$ would be $u_i^{k_j} (D) = r_i s_{k_j} -c ^{so}$ which is greater than his previous payoff.
\end{enumerate}
Then, $All-C$ strategy profile can never be an NE in $\mathcal{G}^{Al}$.
\end{proof}

The results presented by Theorems~\ref{thm:AllD} and \ref{thm:AllC} show that we are cannot enforce cooperation in the current version of reward sharing in Algorand. In fact, if all users are rational they will try to only play $D$ and the system remains in $All-D$ Nash equilibrium. The main problem is due to have different costs for different type of nodes, whereas they receive the same portion of rewards. Hence, in the following we suggest a novel mechanism to share rewards in Algorand which considers different type of users.

 \subsection{Our Proposed Reward Sharing Mechanism}

As shown in Figure~\ref{fig:OurModel}, we suggest that the reward $B_i$ must be divided into three portions, and then be distributed among the nodes given their stakes. In our model, we assume that $\alpha B_i$, $\beta B_i$, and $\gamma B_i$ must be distributed among leaders, committee members, and other online nodes, where $\alpha \in (0,1)$, $\beta \in (0,1)$, and $\gamma \in (0,1)$ should be chosen by the designer, such that Algorand Foundation can enforce the cooperation among users. Note that $ \alpha + \beta + \gamma  =1$.  Given this approach, one can provide different incentives to different type of users. Hence, the payoff would be calculated by 
\begin{equation}\label{eq_PayoffsNew}
    u_i^j(C) = \begin{cases}
        r_i^L s_{l_j} - c^L & \text{Leader } l_j \\
        r_i^M s_{m_j} - c^M & \text{Committee member } m_j \\
        r_i^K s_{k_j} - c^K & \text{Online node } k_j,
    \end{cases}
\end{equation}
where $r_i^L = \frac{\alpha B_i}{S_L}$, $r_i^M = \frac{\beta B_i}{S_M}$, and $r_i^K = \frac{\gamma B_i}{S_K}$.  Let us now define and analyze a new game $\mathcal{G}^{Al+}$, in which the payoffs are calculated by Equation (\ref{eq_PayoffsNew}). 

\begin{figure}[t]
\centerline{\includegraphics[scale=0.55]{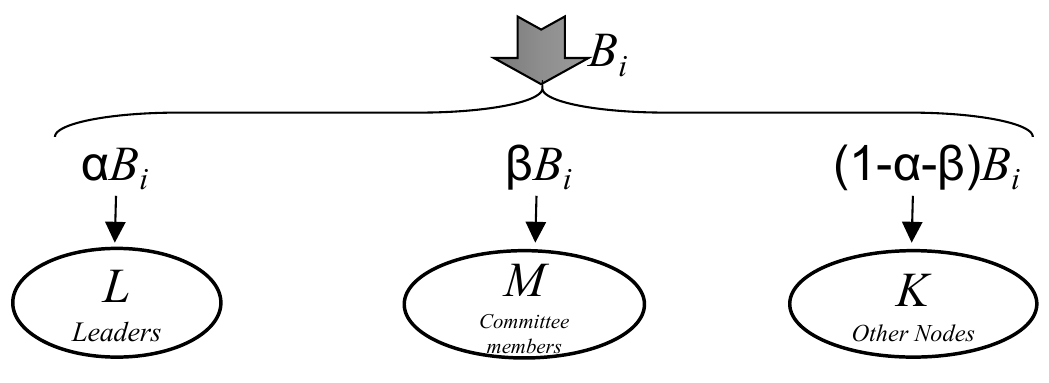}}
\caption{Our proposed model shares the reward according to the roles of nodes as well as their stakes.}
\label{fig:OurModel}
\end{figure}

\subsection{Analysis of $\mathcal{G}^{Al+}$}

In this section, we will first find conditions under which we can foster cooperation of users. Then we will investigate the existence of NE in this game. The following lemma presents the conditions under which, the leaders and the committee members have enough incentive to cooperate in round $i$.

\begin{lemma}\label{Lem:CooperationLeaderMember}
Considering $\mathcal{G}^{Al+}$ with N players ($n_L > 1$ leaders, $n_M$ committee members, and $n_K$ remaining nodes), where reward $B_i$ shares with ratios $\alpha$, $\beta$, and $\gamma = 1 - \alpha - \beta$ between leaders, committee members, and remaining nodes. A selfish leader $l_j$ or committee member $m_j$ cannot deviate from $C$ strategy unilaterally to increase its payoff, if and only if:
\begin{equation*}
    B_i > \max \{\frac{c^L - c^{so}}{(\frac{\alpha}{S_L} - \frac{\gamma}{S_K + s^*_{l_j}})s^*_{l_j}}, \frac{c^M- c^{so}}{(\frac{\beta}{S_M} - \frac{\gamma}{S_K + s^*_{m_j}})s^*_{m_j}}  \},
\end{equation*}
where $s^*_{l_j}$ and $s^*_{m_j}$ are the minimum values of stakes for the leaders and committee members in round $i$.
\end{lemma}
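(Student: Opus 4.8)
The plan is to compare, for a single leader (and separately a single committee member), the payoff from cooperating against the best payoff available from a unilateral deviation, and then to convert the ``no profitable deviation'' condition into a lower bound on $B_i$. The conceptual heart of the argument is the form of the deviation payoff. Since $n_L > 1$, if one leader $l_j$ switches from $C$ to $D$ the remaining leaders still drive the protocol to consensus and a block is added, exactly as in the proof of Theorem~\ref{thm:AllC}. Under the role-based scheme of $\mathcal{G}^{Al+}$, however, the deviator no longer earns at the leader rate $\alpha B_i / S_L$; it instead collects its reward as an online node, and its stake migrates into the online pool so that the rate it faces becomes $\gamma B_i / (S_K + s_{l_j})$ rather than $\gamma B_i / S_K$. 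Hence its deviation payoff is $\frac{\gamma B_i}{S_K + s_{l_j}} s_{l_j} - c^{so}$, while its cooperative payoff is $\frac{\alpha B_i}{S_L} s_{l_j} - c^L$ from Equation~(\ref{eq_PayoffsNew}).

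First I would write the no-deviation condition $u_i^{l_j}(C) \ge u_i^{l_j}(D)$ and rearrange to isolate $B_i$, obtaining $B_i \ge \frac{c^L - c^{so}}{s_{l_j}\left(\frac{\alpha}{S_L} - \frac{\gamma}{S_K + s_{l_j}}\right)}$, valid provided the denominator is positive, i.e. $\frac{\alpha}{S_L} > \frac{\gamma}{S_K + s_{l_j}}$; this is precisely the regime in which a finite reward can suppress leader defection. An identical computation for a committee member, who on deviation likewise drops to the online role and faces rate $\gamma B_i / (S_K + s_{m_j})$, yields $B_i \ge \frac{c^M - c^{so}}{s_{m_j}\left(\frac{\beta}{S_M} - \frac{\gamma}{S_K + s_{m_j}}\right)}$.

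The remaining step is to explain why among all leaders the binding (largest) lower bound is attained at the smallest stake, which is what introduces $s^*_{l_j}$. For fixed aggregates $S_L, S_K$, consider the denominator $g(s) = s\left(\frac{\alpha}{S_L} - \frac{\gamma}{S_K + s}\right)$ as a function of an individual stake $s$. Differentiating gives $g'(s) = \frac{\alpha}{S_L} - \frac{\gamma S_K}{(S_K + s)^2}$, and since $\frac{\gamma S_K}{(S_K+s)^2} < \frac{\gamma}{S_K + s} < \frac{\alpha}{S_L}$ under the positivity condition, $g$ is strictly increasing; as $c^L - c^{so} > 0$, the per-leader bound $\frac{c^L - c^{so}}{g(s)}$ is strictly decreasing in $s$. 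Thus the hardest leader to keep honest is the one with minimum stake $s^*_{l_j}$, and analogously $s^*_{m_j}$ for committee members. Requiring $B_i$ to beat both bounds simultaneously produces the $\max$ in the statement, and reading the strict inequality as making cooperation a strict best response (so defection strictly fails to help) yields the claimed equivalence.

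I expect the main obstacle to be pinning down the deviation payoff correctly, in particular recognizing that the deviator's stake rejoins the online pool and turns $S_K$ into $S_K + s_{l_j}$ in the denominator; an incorrect model here would drop the $+s^*_{l_j}$ term and give the wrong bound. By contrast the monotonicity check is routine once the positivity condition $\frac{\alpha}{S_L} > \frac{\gamma}{S_K + s}$ is invoked, and the $\max$ simply encodes that both role-specific constraints must hold at once.
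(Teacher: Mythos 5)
Your proposal is correct and takes essentially the same route as the paper: compare the cooperative payoff $\frac{\alpha B_i}{S_L}s_{l_j}-c^L$ (resp.\ $\frac{\beta B_i}{S_M}s_{m_j}-c^M$) against the deviation payoff $\frac{\gamma B_i}{S_K+s_{l_j}}s_{l_j}-c^{so}$ in which the deviator's stake migrates into the online pool, rearrange into the per-node lower bound on $B_i$ under the positivity conditions \eqref{eq:Cond1}--\eqref{eq:Cond2}, and take the maximum of the two role-specific bounds at the minimum stakes. Your explicit monotonicity check (that $g(s)=s(\frac{\alpha}{S_L}-\frac{\gamma}{S_K+s})$ is increasing, so the binding constraint is at $s^*_{l_j}$, $s^*_{m_j}$) is a point the paper's proof asserts only implicitly when substituting the minimum stakes, so your version is, if anything, slightly more complete.
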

\begin{proof}
Let us consider that all leaders and committee members have cooperated in a given strategy profile. In this case, the payoff for any cooperative leader $l_j \in L$ would be equal to $u_i^{l_j}(C) = \frac{\alpha B_i}{S_L} s_{l_j} - c^L$. This payoff would be changed to $u_i^{l_j}(D) = \frac{\gamma B_i}{S_K + s_{l_j} } s_{l_j} - c^{so}$, if the leader $l_j$ plays $D$ and only keep its status online, without playing its role of a leader in Algorand. Hence, this leader has no incentive to defect if $u_i^{l_j}(C) > u_i^{l_j}(D)$. Consequently, we can show that under the following condition on $B_i$, the leader $l_j$ has no incentive to deviate from $C$ to $D$:
\begin{equation}\label{eq_BoundLeader}
B_i > \frac{c^L - c^{so}}{(\frac{\alpha}{S_L} - \frac{\gamma}{S_K + s_{l_j}})s_{l_j}}.
\end{equation}
%
%
%
Similarly, any committee member $m_j \in M$ cannot increase his payoff unilaterally by defecting and play $D$ if:
\begin{equation}\label{eq_BoundMember}
B_i > \frac{c^M - c^{so}}{(\frac{\beta}{S_M} - \frac{\gamma}{S_K + s_{m_j}})s_{m_j}}.
\end{equation}
%
Given two different bounds on the distributed rewards in Equations \eqref{eq_BoundLeader} and \eqref{eq_BoundMember}, and if we consider that $s^*_{l_j}$ and $s^*_{l_m}$ are the minimum values of stakes for leaders and committee members in round $i$, we can conclude that no leader or committee member can deviate in round $i$ if 
\begin{equation*}
    B_i > \max \{\frac{c^L - c^{so}}{(\frac{\alpha}{S_L} - \frac{\gamma}{S_K + s^*_{l_j}})s^*_{l_j}}, \frac{c^M- c^{so}}{(\frac{\beta}{S_M} - \frac{\gamma}{S_K + s^*_{m_j}})s^*_{m_j}}  \}.
\end{equation*}
\end{proof}
The results presented in Lemma~\ref{Lem:CooperationLeaderMember} show that the Algorand Foundation must always distribute enough rewards to the leader and the committee members in each round to enforce cooperative behaviors among them. The optimal reward $B_i$ is a function of the cost of cooperation and the current state of stakes in this round. It also depends on the values of $\alpha$, $\beta$, and $\gamma$, which must be selected by the administrator. We consider that these values would be announced at the beginning of each round. Another interesting fact is that if we assign more fraction of the reward $B_i$ to the leaders and the committee members (i.e., increasing the values of $\alpha$ and $\beta$), we can reduce the value of reward $B_i$, but have all leaders and committee members cooperative in a cooperative strategy profile. This will help the administrator to save more \emph{Algos} for future use. Finally, giving more rewards to online nodes ($k_j \in K$) will increase the value of required reward for cooperative behavior of leaders and committee members. Finally, it is worth mentioning that in Lemma~\ref{Lem:CooperationLeaderMember},  the following conditions must hold:
\begin{equation}\label{eq:Cond1}
\frac{\alpha}{S_L} - \frac{\gamma}{S_K + s_{l_j}} > 0
\end{equation}
\begin{equation}\label{eq:Cond2}
 \frac{\beta}{S_M} - \frac{\gamma}{S_K + s_{m_j}} > 0
\end{equation}
%
This can be easily proved given that the cost of cooperation for the leaders and the committee members (i.e., $c^L$ and $c^M$) are always positive. Having the required conditions on cooperative behavior of the leaders and the committee members in our hand with Lemma~\ref{Lem:CooperationLeaderMember}, we can now establish required conditions under which $\mathcal{G}^{Al+}$ has a Nash equilibrium apart from $All-D$ Nash equilibrium, in which some users cooperate in Algorand network. In fact this new class of cooperative NE in $\mathcal{G}^{Al+}$, will depend on the behavior of other online nodes and their characteristics in the Algorand network for any given round. Let us first review two important concepts defined by Algorand~\cite{gilad2017algorand} in the following definitions.  

%
%
%
%
%

\begin{definition}\label{Def_StrongSynch} 
In Algorand network, ``strong synchrony" is a network state, where most honest Algorand nodes (e.g., 95\%) can send messages that would be received by most of the other nodes (e.g., 95\%) within a given time limit.
\end{definition}
\begin{definition}\label{Def_WeakSynch} 
In Algorand network, with ``weak synchrony" state, the network can be asynchronous for a long but bounded period of time. After this asynchrony period, network must be again strongly synchronous for a reasonably long time.
\end{definition}
By having strong and weak synchrony definitions, we can form multiple sets of Algorand nodes which meet strongly synchronous network assumption together.
\begin{definition}
``Algorand strong synchrony set" is a list of Algorand nodes that together forms a strongly synchronous network.
\end{definition}
As Algorand protocol achieves liveness in strongly synchronous settings and safety with weak synchrony, the following theorem focuses on finding Nash equilibria for $\mathcal{G}^{Al+}$ with the strong synchrony assumption.
\begin{theorem}\label{thm:NASH1}
In game $\mathcal{G}^{Al+}$ with $N$ players ($n_L > 1$ leaders, $n_M$ committee members, and $n_K$ remaining nodes), for each Algorand strong synchrony set $\mathbb{Y}$, a strategy profile $s^*$  is a Nash equilibrium in round $i$, if in this strategy profile:
\begin{enumerate}
    \item All leaders cooperate,
    \item All committee members cooperate,
    \item All other nodes which are in $\mathbb{Y}$ cooperate, and
    \item Other online nodes defect
\end{enumerate}
and the value of $B_i$ is selected such that,
\begin{equation*}\begin{aligned}B_i > \max \{\frac{c^L-c^{so}}{(\frac{\alpha}{S_L} - \frac{\gamma}{S_K + s^*_{l_j}})s^*_{l_j}}, \frac{c^M-c^{so}}{(\frac{\beta}{S_M} - \frac{\gamma}{S_K + s^*_{m_j}})s^*_{m_j}}, \\ \frac{(c^K-c^{so}) S_K}{s^*_{k_j} \gamma}  \}\end{aligned}\end{equation*}
where $s^*_{l_j}$,  $s^*_{m_j}$, and $s^*_{k_j}$ are the minimum values of stakes for leaders, committee members, and other online nodes in $\mathbb{Y}$, in round $i$.
\end{theorem}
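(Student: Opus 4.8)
The plan is to verify the Nash-equilibrium condition of Definition~\ref{def:NE} directly, by showing that no single player in any of the four classes of the profile $s^*$ can raise its payoff through a unilateral deviation. Since strategy $O$ is strictly dominated (Lemma~\ref{lem:O}), the only deviation worth considering for a cooperating player is a switch to $D$, and for the defecting out-of-$\mathbb{Y}$ players it is a switch to $C$. I would organize the argument so that each of the three terms inside the $\max$ emerges from closing off one family of these deviations.

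First I would dispatch the leaders and committee members (items~1 and~2). A single leader or committee member who switches to $D$ does not prevent a block from being added this round, because $n_L > 1$ and the remaining leaders, committee members, and cooperating nodes of $\mathbb{Y}$ still drive the protocol to consensus; the deviator merely joins the online-node pool (its stake moving into $S_K$) and collects the corresponding online reward while paying only $c^{so}$. This is precisely the deviation analyzed in Lemma~\ref{Lem:CooperationLeaderMember}, so the bounds \eqref{eq_BoundLeader} and \eqref{eq_BoundMember}, evaluated at the minimum stakes $s^*_{l_j}$ and $s^*_{m_j}$, supply the first two arguments of the $\max$ and guarantee that cooperation is a best response for every leader and committee member.

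The genuinely new content is in items~3 and~4. For a cooperating other-node $k_j \in \mathbb{Y}$, the key modeling point is that $\mathbb{Y}$ is by definition a set whose members jointly realize strong synchrony; a unilateral defection of $k_j$ (which stops gossiping) destroys this synchrony for the round, so no block is added and $k_j$ earns nothing while still paying $c^{so}$. Comparing its cooperative payoff $\frac{\gamma B_i}{S_K}s_{k_j} - c^K$ with the deviation payoff $-c^{so}$ and requiring the former to be larger yields $B_i > \frac{(c^K - c^{so})S_K}{\gamma s_{k_j}}$; taking the worst case over the minimum stake $s^*_{k_j}$ in $\mathbb{Y}$ produces the third argument of the $\max$. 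For a defecting other-node outside $\mathbb{Y}$ I would argue in the opposite direction: because $\mathbb{Y}$ alone already produces the block, this node's action is irrelevant to whether a block is added, so in the absence of any punishment mechanism it receives the same online reward whether it plays $C$ or $D$; switching from $D$ to $C$ therefore only raises its cost from $c^{so}$ to $c^K$, which cannot be profitable. Combining the four classes, under $B_i$ exceeding the stated $\max$ no player can unilaterally improve, so $s^*$ is a Nash equilibrium.

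I expect the main obstacle to be the careful justification of the block-creation outcome following each deviation — specifically, establishing the asymmetry that defection by any member of $\mathbb{Y}$ collapses strong synchrony (making $\mathbb{Y}$ effectively essential and forcing the no-reward outcome that drives the third bound), whereas defection by a node outside $\mathbb{Y}$ leaves block creation, and hence every online node's reward, untouched. This rests on the definition of a strong-synchrony set together with the no-punishment assumption, and making this distinction precise is exactly what separates the third bound from the leader and committee-member bounds inherited from Lemma~\ref{Lem:CooperationLeaderMember}.
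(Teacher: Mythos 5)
Your proposal matches the paper's proof essentially step for step: it invokes Lemma~\ref{Lem:CooperationLeaderMember} to handle deviations by leaders and committee members, derives the third bound from the payoff comparison $\frac{\gamma B_i}{S_K}s_{k_j} - c^K > -c^{so}$ for a node in $\mathbb{Y}$ whose defection collapses strong synchrony (hence no block and no reward), and rules out $D \to C$ deviations by nodes outside $\mathbb{Y}$ because the block is produced regardless and cooperating only raises their cost from $c^{so}$ to $c^K$. The argument is correct, and your closing observation — that the claim that a single defection inside $\mathbb{Y}$ destroys synchrony rests on the definition of a strong-synchrony set — is precisely the assumption the paper also leans on (via Definition~\ref{Def_StrongSynch}) without further elaboration.
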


\begin{proof}
Let us assume that strategy profile $s^*$ is already played. We must now prove that none of the users can increase their payoffs unilaterally. The payoff of leaders and committee members cannot be increased unilaterally if the conditions of Lemma~\ref{Lem:CooperationLeaderMember} are hold. 
%
For each remaining node $s_{k_j}$ which is the member of Algorand strong synchrony set $\mathbb{Y}$, the payoff of cooperation would be $u_i^{k_j}(C) = \frac{\gamma B_i}{S_K} s_{k_j} - c^K$. But, the payoff of a defective node would be $u_i^{k_j}(D) = -c^{so}$. In other words, if a member of $\mathbb{Y}$ defects, no new final block would be created in this round given Definition~\ref{Def_StrongSynch}. So, to prevent $s_{k_j}$ from defecting:
\begin{align}
             & u_i^{k_j}(C) > u_i^{k_j}(D) \nonumber \\
    \implies & \frac{\gamma B_i}{S_K} s_{k_j} - c^K > -c^{so} \nonumber \\
    \implies & B_i > \frac{(c^K-c^{so}) S_K}{s_{k_j} \gamma} \label{thm5:eq3}
\end{align}
Hence this would be added to the conditions of Lemma~\ref{Lem:CooperationLeaderMember} to form the Nash equilibrium strategy profile of the game. Finally, other online nodes who are not in Algorand strong synchrony set can not increase their payoffs by deviating from $D$ to $C$ and accept the incurred cost of $c^K$, as the block would be made whether they cooperate or not. 
\end{proof}

\subsection{Proposed Reward Sharing Mechanism}

Our next goal is to extend the current Algorand reward sharing method by considering strategic behavior of users/nodes. In this case, we provide a solution for Algorand Foundation to foster cooperative behavior among all Algorand nodes, by sharing rewards based on the assigned roles. Moreover, our computed bounds in Theorem \ref{thm:NASH1} shows that we can minimize the reward $B_i$ by selecting suitable values for $\alpha$, $\beta$, and $\gamma$.

Our results presented in Section~\ref{sub:Motivation} showed that the Algorand Foundation needs to deploy an incentive-compatible mechanism to prevent nodes from selfish behavior and defect to increase its payoff unilaterally. We have proposed an algorithm based on Theorem~\ref{thm:NASH1} which provides enough incentive for Algorand nodes to cooperate. Our proposed Algorithm~\ref{reward-sharing-algorithm} proceeds as follows: at the end of each Algorand round, the Algorand Foundation computes the stakes for the leaders, the committee members, and other online nodes as  $S_L$, $S_M$, and $S_K$. It also computes the minimum stakes for each role as $s^*_{l_j}$, $s^*_{m_j}$, and $s^*_{k_j}$. Finally, foundation will calculate the optimal values for $\alpha$ and $\beta$ to minimize $B_i$, by using the defined bounds in Theorem~\ref{thm:NASH1}. As these values will be computed at the end of each round, all Algorand nodes know in advance that they cannot deviate from cooperation to obtain better payoff. Hence, the mechanism is strategy-proofed and there is no incentive for nodes to defect. 

\begin{algorithm}[t]
\caption{Incentive-Compatible Reward Sharing}\label{reward-sharing-algorithm}
\begin{algorithmic}[1]
   \Procedure{ComputeParameters}{$L$, $M$, $K$, $Stakes$}
        \State // Compute Stakes \& Costs
         \State $S_L\gets \sum_{l \in L}{s_l}$
        \State $S_M\gets \sum_{m \in M}{s_m}$
        \State $S_K\gets \sum_{k \in K}{s_k}$
        \State $c^L, c^M, c^K \gets$ \Call{AlgorandCosts}{ }
        \State // Compute minimum $s_{l_j}$ and $s_{m_j}$
        \State $s^*_{l_j} \gets \operatorname{min}_{l_j \in L}{s_{l_j}}$
        \State $s^*_{m_j} \gets \operatorname{min}_{m_j \in M}{s_{m_j}}$
        \State $s^*_{k_j} \gets \operatorname{min}_{k_j \in K}{s_{k_j}}$
        \State // Compute $\alpha$, $\beta$, $B_i$ from Theorem \ref{thm:NASH1} bounds
        \State Find $\alpha$, $\beta$ to minimize $B_i$ where: 
        \State \vspace{-0.53cm}\quad \begin{small}\begin{equation*}\begin{aligned}B_i > \max \{\frac{c^L-c^{so}}{(\frac{\alpha}{S_L} - \frac{\gamma}{S_K + s^*_{l_j}})s^*_{l_j}}, \frac{c^M-c^{so}}{(\frac{\beta}{S_M} - \frac{\gamma}{S_K + s^*_{m_j}})s^*_{m_j}}, \\ \frac{(c^K-c^{so}) S_K}{s^*_{k_j} \gamma}  \}\end{aligned}\vspace{-0.51cm}\end{equation*}\end{small} 
        \Return $\alpha$, $\beta$, $B_i$
    \EndProcedure
\end{algorithmic}
\end{algorithm}

\section{Evaluation}
\label{sec:Simulation}

In order to evaluate our proposed mechanism, we first conduct a series of numerical analysis to obtain the best reward shares in our model (i.e., $\alpha$ and $\beta$). We then deploy our proposed reward sharing mechanism over Algorand and evaluate its performance, comparing to the reward sharing proposed by Algorand Foundation. 
  
\subsection{Optimal Reward Share Calculation: A Numerical Analysis}

According to the results presented in Theorem~\ref{thm:NASH1}, we can minimize the reward in each round such that it guarantees the cooperation of subset of Algorand nodes. The optimal reward is ensured by choosing optimal reward shares for leaders and committee members, i.e., $\alpha$ and $\beta$. In our numerical analysis, we assume that the minimum acceptable values of stakes for each role are equal to $s^*_l=1$, $s^*_m=1$, and $s^*_k=10$ \emph{Algos}. In other words, by setting $s^*_k = 10$, we ignore any strong synchrony set containing nodes with stakes less than $10$ \emph{Algos} in our simulations. We also assume that the cost of cooperation for the leaders, the committee members, and other nodes are $c^L=16$, $c^M=12$, $c^K=6$, and $c^{so} = 5$ micro \emph{Algos}. Figure~\ref{fig:alpha-beta-b} shows the results for the minimum values of $B_i$ in each round,  for different values of $\alpha$ and $\beta$. Our results show that for $(\alpha, \beta)=(0.02, 0.03)$, the minimum values of $B_i$ would be about $5.2$ \emph{Algos} per round.

\begin{figure}[t]
	\centering
		\includegraphics[scale=0.28]{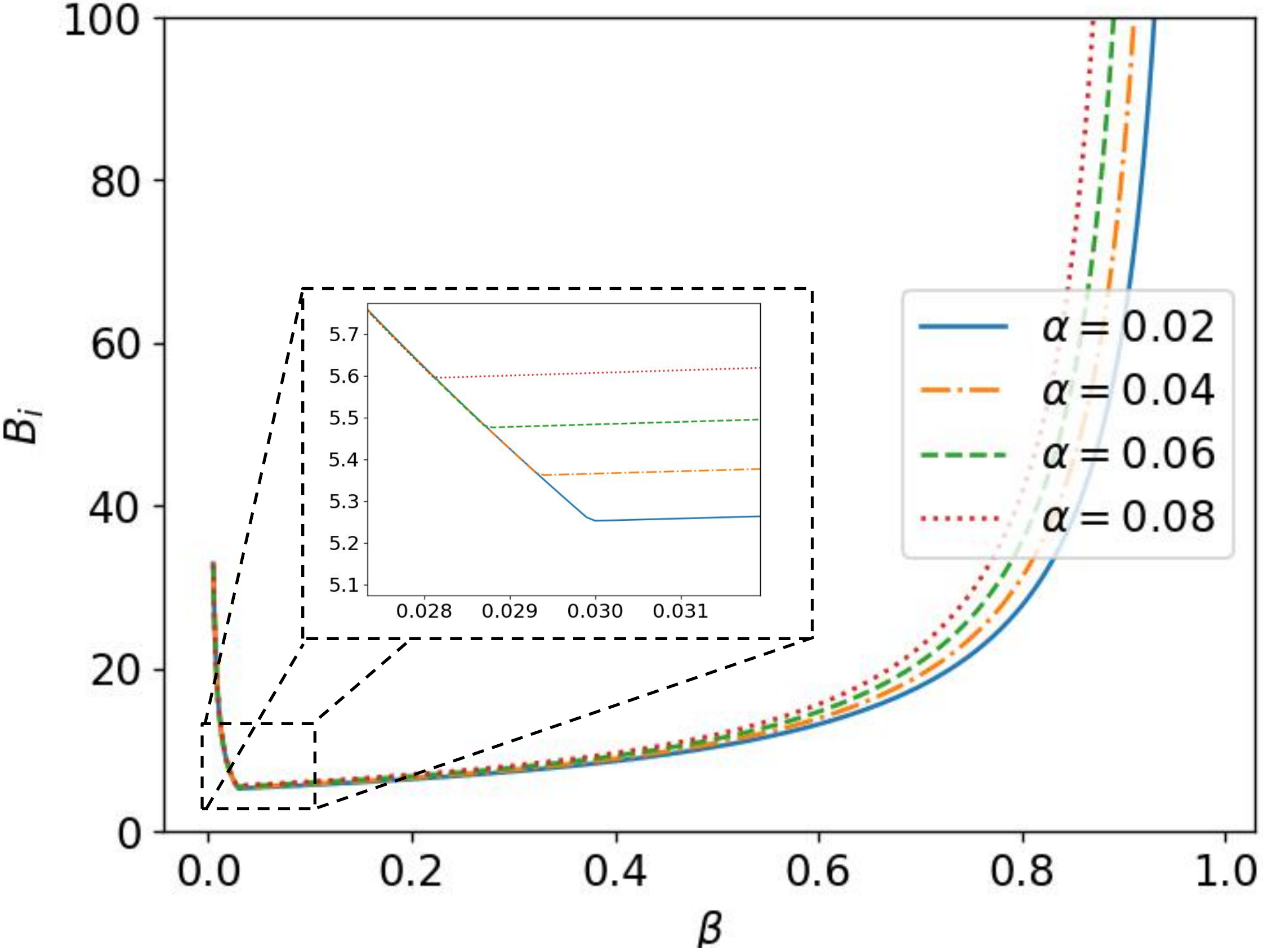}
	\caption{Minimum values of $B_i$ for different $\alpha$ and $\beta$ values.}
	\label{fig:alpha-beta-b}
\end{figure}

In fact, considering the value of $S_K$ which is always much greater than $S_L$ and $S_M$, the calculated bounds presented in Theorem~\ref{thm:NASH1} is usually a function of third bound, i.e., $\frac{(c^K -c^{so}) S_K}{s^*_{k_j} \gamma}$. Hence, to minimize the value of $B_i$ we need to maximize gamma and consequently minimize $\alpha$ and $\beta$ (recall that $\gamma = 1-\alpha - \beta$). In summary, we should always consider enough share of the total reward for leaders and committee members, as shown in Equations (\ref{eq:Cond1}) and (\ref{eq:Cond2}). Moreover, we also provide enough rewards to all other online nodes considering the value $B_i$ which is greater than $\frac{(c^K-c^{so}) S_K}{s^*_{k_j} \gamma}$.



\begin{figure*}
	\centering
	\begin{subfigure}{.24\linewidth}
		\centering
		\includegraphics[width=\linewidth]{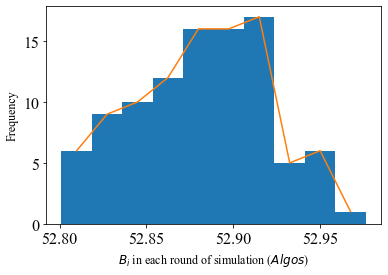}
		\caption{$\mathcal{U}(1,\,200)$}
	\end{subfigure}%
	\begin{subfigure}{.24\linewidth}
		\centering
		\includegraphics[width=\linewidth]{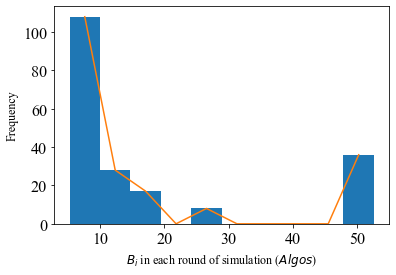}
		\caption{$\mathcal{N}(100,\,20)$}
	\end{subfigure}%
	\begin{subfigure}{.24\linewidth}
		\centering
		\includegraphics[width=\linewidth]{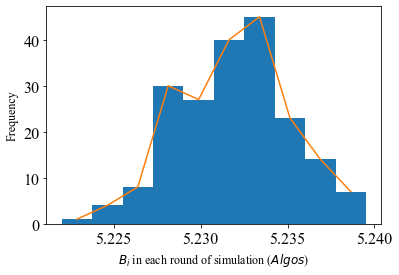}
		\caption{$\mathcal{N}(100,\,10)$}
	\end{subfigure}%
	\begin{subfigure}{.24\linewidth}
		\centering
		\includegraphics[width=\linewidth]{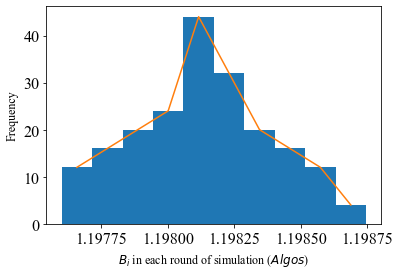}
		\caption{$\mathcal{N}(2000,\,25)$}
	\end{subfigure}%
	
	\caption{Distribution of computed $B_i$ values in each simulation by our proposed mechanism, for different distributions of stakes. }
	\label{fig:DistributionofStakes}
\end{figure*}

\begin{figure*}[t]
	\centering
	\begin{subfigure}{.33\linewidth}
		\centering
		\includegraphics[width=\linewidth]{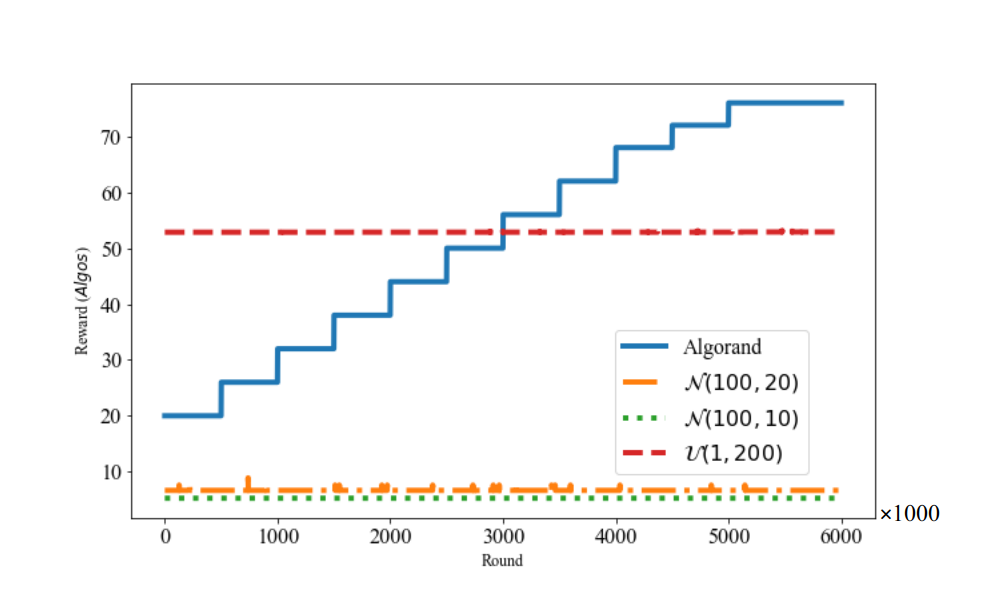}
		\caption{}
	\end{subfigure}%
	\begin{subfigure}{.33\linewidth}
		\centering
		\includegraphics[width=\linewidth]{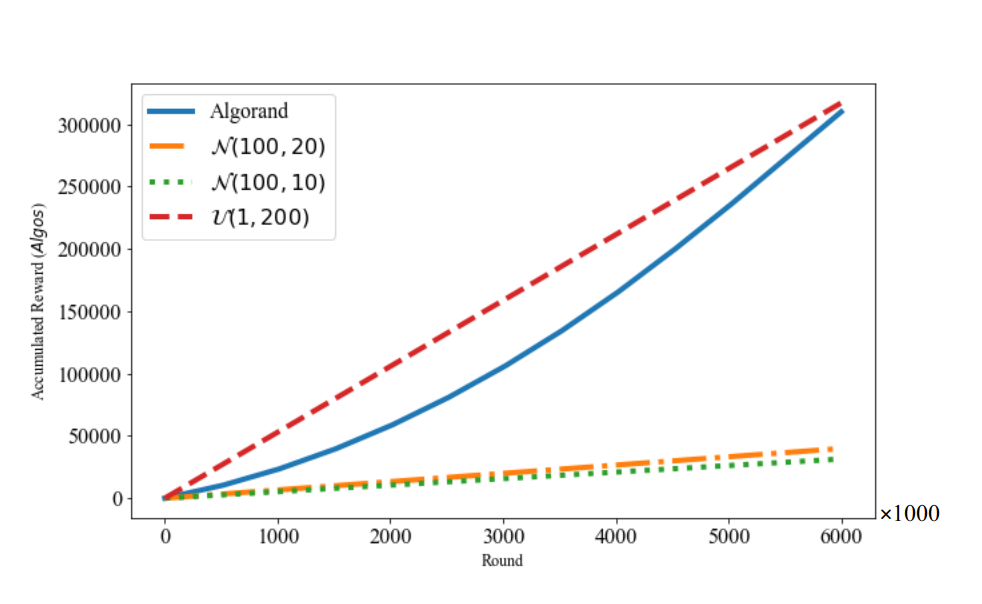}
		\caption{}
	\end{subfigure}%
	\begin{subfigure}{.33\linewidth}
		\centering
		\includegraphics[width=\linewidth]{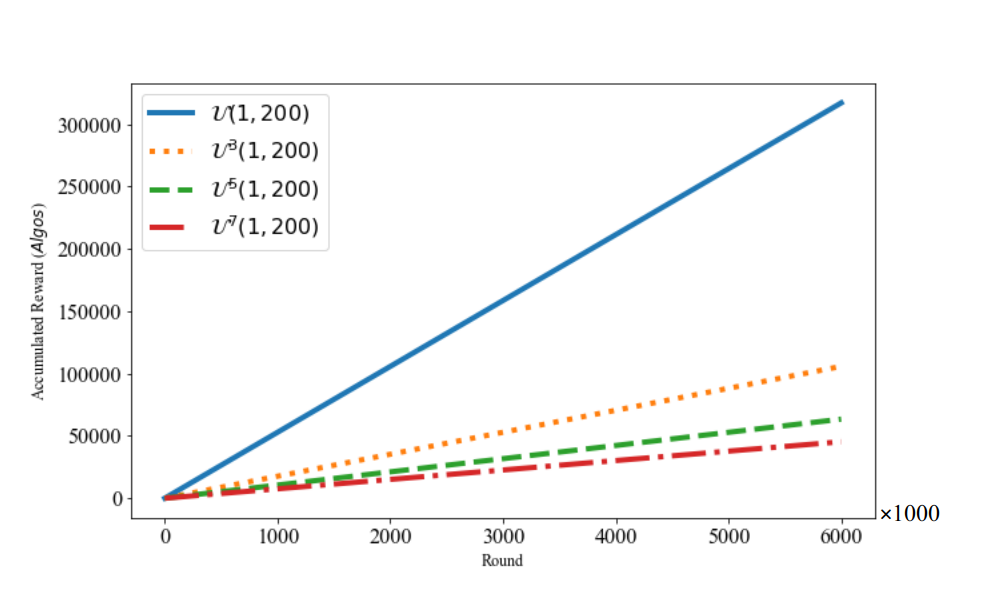}
		\caption{}
	\end{subfigure}%
	\caption{(a) Distributed rewards in each round by our proposed algorithm and Algorand Foundation given different distribution of stakes, (b) the accumulated rewards distributed among Algorand nodes, (c) the accumulated rewards  when the algorand nodes with less than 3 ($\mathcal{U}^3(1,\,200)$), 5 ($\mathcal{U}^5(1,\,200)$), and 7 ($\mathcal{U}^7(1,\,200)$) stakes have been removed from the network.}
	\label{fig:Reward}
\end{figure*}

\subsection{Performance Evaluation of Our Proposed Mechanism }

Given the calculated values for $\alpha$ and $\beta$, we evaluate the performance of our proposed mechanism in term of reward sharing and fostering cooperation. We simulate an Algorand network containing $500$ thousands nodes, in which the amount of stakes for leaders and committee members are $S_L = 26$ and $S_M = 13$K, respectively. In fact, $S_M$ is equal to  $S_{STEP}  \times (2 + 1) + S_{FINAL} \times 1$, according to Algorand,
where $S_{STEP}=1$K, and $S_{FINAL}=10$K~\cite{gilad2017algorand}. In other words, $S_{STEP}$ and $S_{FINAL}$ define bounds on the expected values of \emph{Algo} in a given round for reduction and Binary\BAStar. In our simulation, we distribute $50$ millions \emph{Algos} among these $500$K nodes using uniform  distribution of $\mathcal{U}(1,\,200)$ and normal distribution of $\mathcal{N}(100,\,20)$, $\mathcal{N}(100,\,10)$. This is similar to the initial phase of Algorand as well as the current status of the Algorand network. We also simulated transactions between all nodes. In each round, we choose randomly $1000$ nodes, in which nodes with higher stakes would be selected more often. Note that a node can be chosen more than one time in each round. Then we generate a series of random transactions for selected nodes with a uniform distributions between $-4$ to $4$. Negative values  represent sending \emph{Algos} while positive values represent receiving \emph{Algos} in nodes. With these values we tried to emulate the real Algorand exchange system available at algoexplorer~\cite{algo-explorer}. As for reward sharing mechanism, we deploy Algorand Foundation proposal presented in Table~\ref{fig:RewardModel1}  and our proposed mechanism presented in Algorithm~\ref{reward-sharing-algorithm}. We run the simulation for each graph for 200 times with different distributions, where each instance executes for 10 rounds. We finally made an average of rewards in each round.


The results show that the calculated rewards by our proposed mechanism follows the distribution of stakes in the network, as shown in Figure~\ref{fig:DistributionofStakes}. For example, we  must distribute high reward (around 50 \emph{Algos}) for uniform distribution of $\mathcal{U}(1,\,200)$, as there exist many nodes with low stakes. But with a normal distribution $\mathcal{N}(100,\,10)$, we need only to distribute small rewards, i.e., around 5 \emph{Algos}. In fact $\mathcal{N}(100,\,10)$ simulates the initial phase of Algorand, where around 50 millions \emph{Algos} were in the network. Comparing the results presented in Figure~\ref{fig:DistributionofStakes}-(d) (which simulates current status of Algorand with more than 1 billion \emph{Algos}, by $\mathcal{N}(2000,\,25)$ stake distribution) with Figures~\ref{fig:DistributionofStakes}-(c), we can also conclude that when the number of \emph{Algo} increases, we need smaller reward (around 1.2 \emph{Algos}) to enforce cooperation. The results show that the Foundation can  adapt the rewards given the status of the network in term of stakes, by using our model.


Figure~\ref{fig:Reward}(a)-(b) show the exact calculated reward in each round with our proposed algorithm and Algorand Foundation. As it is discussed in Section~\ref{sub:Reward}, Algorand Foundation suggests simple increasing reward distribution, however our approach adaptively choose the minimum possible reward to guarantee cooperation. The results show that our proposed mechanism distributes much smaller rewards among nodes, given the distribution of stakes. For example, in contrast to the proposed reward sharing by Algorand Foundation which shares $20$ \emph{Algos} in each round for the first 500 thousands rounds\cite{algorand-token-dynamics}, our proposed reward sharing algorithm will share about $5.2$ \emph{Algos} for normal distributions of stakes. More interestingly, our proposal will not increase the reward till 6 millions blocks generation, as it can guarantee cooperation without paying more \emph{Algos}. Our approach only distributes more rewards when the distribution of stakes is $\mathcal{U}(1,\,200)$ (Figures \ref{fig:DistributionofStakes}-(a)~and~\ref{fig:Reward}(a)-(b)). This is due to the fact that the number of nodes with small values of stakes are higher with this distribution.  If we remove the nodes with smaller stakes, e.g., up to 7 stakes from the set of rewarded nodes we can still keep the synchrony of network and distribute much smaller reward. This is shown in Fig.~\ref{fig:Reward}-(c), (where we remove nodes with stakes up to $w$, i.e., $\mathcal{U}^w(1,\,200)$). This also is useful for the Foundation, to tune the reward sharing according to the current conditions of stakes network.

\section{Conclusion}
\label{sec:Conclusion}

In this paper, we first introduced a system model to capture the main operational features of Algorand blockchain. We have  then presented a game-theoretical model to analyze the impact of defective behavior in Algorand blockchain. We then defined different behaviors of nodes in Algorand and show that there are no mutual cooperation states in the game where an Algorand user cannot increase its payoff by unilaterally defecting and not participating in the protocol tasks. We then comprehensively studied the problem of selfishness and proposed possible solution for it. We analyzed the defined game and obtained the Nash equilibria for different conditions. Our analytical results shows that we can always enforce cooperation by choosing enough rewards among Algorand nodes. Moreover, our numerical analysis validated that the proposed reward sharing mechanism outperforms the current proposal by Algorand Foundation. We believe that this work is the first step towards a better understanding of the effects of selfish behavior in Algorand network. As the results show, our mechanism can help the Algorand Foundation use the \emph{Algos} wisely as well as adapt dynamically with the distribution of stakes in the network. 

In term of future work, we can also get in touch with the Algorand Foundation to introduce our proposed mechanism for reward sharing in the initial phase (for 1.75 billion \emph{Algos}), as well as the distribution of transaction fees as reward in near future.

\balance

\bibliographystyle{ieeetr}
\bibliography{ref}

\end{document}